\newtheorem{problem}{Problem}
\newtheorem*{pfs*}{Proof Sketch}
\newtheorem*{proof*}{Proof}
\newcommand{\bag}[1]{\texttt{bag}(#1)}
\newcommand{\true}{\texttt{True}}
\newcommand{\tw}{\texttt{tw}}
\newcommand{\false}{\texttt{False}}
\DeclarePairedDelimiter{\ceil}{\lceil}{\rceil}
\title{Quantum-Inspired Perfect Matching under Vertex-Color Constraints}
\titlerunning{Quantum-Inspired Perfect Matching under Vertex-Color Constraints} 
\author{Moshe {Y. Vardi}}{Department of Computer Science, Rice University, USA}{}{}{}
\author{Zhiwei Zhang \footnote{The author list has been sorted alphabetically by last name; this should not be used to determine the extent of authors’ contributions. Corresponding author: Zhiwei Zhang (zhiwei@rice.edu).}} {Department of Computer Science, Rice University, USA}{zhiwei@rice.edu}{}{}
\authorrunning{M. Vardi and Z. Zhang} 
\keywords{Quantum Computing, Perfect Matching, Symbolic Determinant, Dynamic Programming, Graph Gadget } 
\begin{document}

\maketitle

\begin{abstract}
We propose and study the graph-theoretical problem EXISTS-PMVC: \emph{the existence of perfect matching under vertex-color constraints on graphs with bi-colored edges}. EXISTS-PMVC is of special interest because of its motivation from quantum-state identification and quantum-experiment design, as well as its rich expressiveness, i.e., EXISTS-PMVC naturally subsumes important constrained matching problems, such as exact perfect matching. We give complexity and algorithmic results for EXISTS-PMVC under two types of vertex color constraints:   (1) decision-diagram constraints (EXISTS-PMVC-DD) and (2) symmetric constraints (EXISTS-PMVC-Sym).

For EXISTS-PMVC-DD, we reveal its NP-hardness by a graph-gadget technique. We prove that EXISTS-PMVC-Sym with a bounded number of colors (EXISTS-PMVC-Sym-Bounded) is polynomially equivalent with Exact Perfect Matching (XPM), which implies that EXISTS-PMVC-Sym-Bounded is in RNC on general graphs and PTIME on planar graphs. Directly applying algorithms for XPM to solve EXISTS-PMVC-Sym-Bounded is, however, impractical. We propose algorithms that natively handle EXISTS-PMVC-Sym-Bounded with considerably better complexity. Our novel results for EXISTS-PMVC provide insights into both constrained matching and scalable quantum experiment design.
\end{abstract}

\section{Introduction}
Quantum computing is widely believed to be one of the most promising research areas that could be a game changer, as it studies a radical new type of computational model \cite{deutsch1985quantum}.  Using quantum computing, it is hoped to tackle classically hard problems that are believed to be intractable for traditional computers, such as factorization by Shor's Algorithm \cite{shor1994algorithms,simon1997power}. Nevertheless, implementations of quantum computers today have not yet reached satisfactory scalability and reliability to solve practical problems. On the path of designing useful quantum computers, quantum experiments \cite{pan2012multiphoton} are necessary. Designing quantum experiments can be often related to solving classical problems, which forms a recently active research direction \cite{jooya2016graph,duncan2020graph}. In this direction, a surprising new connection has been recently discovered between the quantum state of a quantum optical circuit and perfect matchings (PM) with respect to vertex-color constraints on graphs with bi-colored edges \cite{krenn2017quantum}. For example, designing an optical circuit whose superposition exhibits GHZ state is equivalent to constructing a graph where the vertex colorings of all its perfect matchings form the set of all mono-chromatic colorings \cite{gu2019quantum3}. Studying perfect matchings under vertex-color constraints provides an important understanding of the properties of quantum circuits \cite{krenn2019questions}. 

Under the motivation above, we propose and investigate a quantum-inspired graph-theoretical problem of deciding whether there \textbf{exists} a \textbf{P}erfect \textbf{M}atching under \textbf{V}ertex-\textbf{C}olor constraints in a graph with bi-colored edges (EXISTS-PMVC). The original graph-theoretical problem proposed in \cite{krenn2019questions} involves operations of complex numbers, which prevents positive complexity and algorithmic results. PMVC is a simplified decision problem that corresponds to the case of the original problem when all edge weights are real and positive. 
This simplified problem is still of interest in quantum computing in that graphs with real, positive weights correspond to circuits with no phase-shift gates \cite{barenco1995elementary}. Designing a circuit satisfying an arbitrary quantum state with no phase-shift gate remains challenging \cite{krenn2019questions,krenn2017quantum}.

Studying EXISTS-PMVC is also valuable in graph theory, besides its motivation in quantum computing. Deciding whether perfect matchings exist in a graph is well-known to be in PTIME via the Blossom Algorithm \cite{edmonds1965paths}. Additional restrictions on perfect matchings often make the problem display interesting complexity and algorithmic properties \cite{elmaalouly2022exact}. While existing work studied PMs under edge color or weight constraints, not much effort has been put into PM under vertex-color constraints on graphs with bi-colored edges.

We point out that the complexity of EXISTS-PMVC depends heavily on the specific representation of the vertex-color constraints. We study two constraint representations of EXISTS-PMVC with a bounded number of colors: (1) decision diagrams (EXISTS-PMVC-DD-Bounded), and (2)  symmetric constraints (EXISTS-PMVC-Sym-Bounded). We prove by graph-gadget methods that the problem (EXISTS-PMVC-DD-Bounded) becomes NP-hard when the constraints are represented by decision diagrams. Despite the negative results on decision diagrams, most interesting quantum states yield constraints that do not need decision diagrams but only symmetry constraints. When the constraints are symmetric, we show that the problem with a bounded number of colors (EXISTS-PMVC-Sym-Bounded) is polynomially equivalent to the well-known problem \emph{Exact Perfect Matching} (XPM). This equivalence indicates that EXISTS-PMVC-Sym-Bounded is in RP and, in fact, RNC \cite{mulmuley1987matching}. 
Compared with XPM, EXISTS-PMVC-Sym-Bounded seems to be a more general problem, in that it allows more types of constraints, bi-colored edges, and more than two colors. Our results, however, reveal that the generality of EXISTS-PMVC-Sym-Bounded does not lift the complexity. 

Towards the goal of efficiently solving quantum-inspired problems, we devise specific algorithms for EXISTS-PMVC-Sym-Bounded. The equivalence between this problem and XPM makes it possible to use XPM algorithms for solving EXISTS-PMVC-Sym-Bounded. 
 Such algorithms, however, are prohibitively expensive and impractical, as the reduction from EXISTS-PMVC-Sym-Bounded to XPM greatly increases the size of the graph. 
 We adapt algorithms of XPM based on symbolic determinant \cite{mulmuley1987matching} to solve EXISTS-PMVC-Sym-Bounded with better complexity. We also devise dynamic-programming (DP)-based algorithm for efficiently solving instances with low-width problems.  Our results not only fill the knowledge gap in constrained perfect matching under vertex-related constraints and bi-colored graphs, but also provide algorithmic progress in identifying properties of large-scale quantum circuits.
 

The main contributions of this paper are ($n$ is the number of vertices, $d$ is the number of colors and $\tw$ is the graph treewidth):

\begin{bracketenumerate}
\item  EXISTS-PMVC-DD with at least two colors is  NP-hard.
\item EXISTS-PMVC-Sym-Bounded is polynomially equivalent to XPM. 
\item EXISTS-PMVC-Sym-Bounded can be solved w.h.p. in $O(n^{2d+1})$ by a symbolic-determinant technique.
\item EXISTS-PMVC-Sym-Bounded can be solved deterministically in $O(n^{2d}\cdot (2d+2)^{\tw+1})$ by dynamic programming.
\end{bracketenumerate}

\section{Problem Definitions and Preliminaries}
\label{sec:pre}

\subsection{Complexity Classes}
Let PTIME be the complexity class of polynomially solvable problems. RP is the class of problems where a polynomial randomized algorithm $\mathcal{R}$ exists such that 1) if the actual answer is ``yes'', then $\mathcal{R}$ returns ``yes'' with constant non-zero probability;  2) if the answer is ``no'', $\mathcal{R}$ always returns ``no''.  NC is the class of problems solvable by a parallel algorithm in logarithmic time with polynomially many processors. NC$^i$ is the subset of NC of problems solvable in $O(\log^i n)$ time by polynomially many processors. RNC and RNC$^i$ are the randomized analogs of NC and NC$^i$, respectively. We have NC$^1\subseteq$ NC$^2 \subseteq\cdots \subseteq$ NC$\subseteq P$ and RNC$^1\subseteq$ RNC$^2 \subseteq\cdots \subseteq$ RNC$\subseteq$ RP.  For two problems $\mathcal{P}_1, \mathcal{P}_2$,  we have that $\mathcal{P}_1\le_P \mathcal{P}_2$ if there is a polynomial reduction from $\mathcal{P}_1$  to $\mathcal{P}_2$. We denote $\mathcal{P}_1<_P \mathcal{P}_2$ if $\mathcal{P}_1\le_P \mathcal{P}_2$ and $\mathcal{P}_2\not\le_P \mathcal{P}_1$. Finally, $\mathcal{P}_1=_P\mathcal{P}_2$ (polynomial equivalence) if $\mathcal{P}_1\le_P \mathcal{P}_2$ and $\mathcal{P}_2\le_P \mathcal{P}_1$.

\subsection{Perfect Matchings in Graphs with Bi-colored Edges under Vertex-Color Constraints}

\begin{figure}[th!]
	\centering
\includegraphics[scale=0.2]{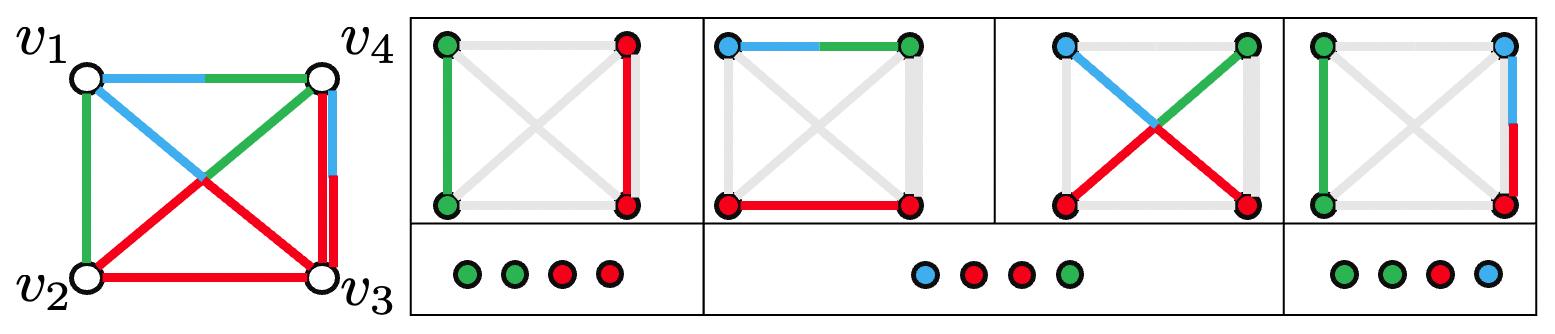}
\caption{Left: a graph $G$ with bi-colored edges. Right: all perfect matchings of $G$ and their inherited vertex colorings. }
\label{fig:graph}
\end{figure}

\begin{definition}
{\rm (Graphs with bi-colored edges)} In this paper, a ``graph'' refers to an undirected, non-simple \footnote{Self-loops are prohibited but multiple edges between the same pair of vertices are allowed.} graph with bi-colored edges. Formally, a graph $G$ is a tuple $(V,E,d)$, where $V$ is the set of vertices with $|V|=n$, $E$ is the set of bi-colored edges and $d$ is the number of colors. For each edge $e\in E$ that connects $u$ and $v$, there are two colors $c_u^e,c_v^e\in \{1,\cdots, d\}$ that are associated with $u$ and $v$, respectively, i.e., $e=\{(u,c_u^e),(v,c_v^e)\}$. If $c_u^e=c_v^e$, we call the edge $e$ monochromatic, otherwise $e$ is bi-colored. See Figure \ref{fig:graph} (left) as an example.
 \end{definition} 

\begin{definition}
{\rm (Vertex coloring)} A vertex coloring of graph $G=(V,E,d)$ is a mapping $c$ from vertices to color indices i.e., $c:V\to \{1,\cdots,d\}$. Alternatively one can write  $c\in {\{1,\cdots,d\}}^V$.
\end{definition} 

\begin{definition}
{\rm (Perfect matchings and their inherited vertex colorings) \cite{krenn2019questions}} A perfect matching $P$ of a graph $G$ is a subset of edges, i.e., $P\subseteq E$, such that for each vertex $v\in V$, exactly one edge in $P$ is adjacent to $v$. For a perfect matching $P$ of a graph $G$, its \textbf{inherited vertex coloring}, denoted by $c^P:V\to \{1,\cdots, d\}$ is defined as follows: for each vertex $v$, let $e$ be the unique edge that is adjacent to $v$, then $c^P(v)=c_v^e$. See Figure \ref{fig:graph} (right) for examples. The color count of a perfect matching $P$ is a tuple $(n_1,\cdots,n_d)$ where $n_i$ is the number of vertices in color $i$ in $c^P$ for each $i\in \{1,\cdots,d\}$. 
\end{definition} 

\begin{definition} {\rm (Legal perfect matchings w.r.t. vertex-color constraints)} Suppose a system of vertex-color constraints defines a set of ``legal'' vertex colorings $\mathcal{C}\subseteq {\{1,\cdots,d\}}^V$. A perfect matching is  \emph{legal} (w.r.t. $\mathcal{C}$) if its inherited vertex coloring is legal, i.e., $c^P\in\mathcal{C}$.
\end{definition}

\begin{problem}
{\rm (Existence of \textbf{P}erfect \textbf{M}atching under \textbf{V}ertex-\textbf{C}olor constraints, EXISTS-PMVC)} Given a graph $G=(V,E,d)$  and a set $\mathcal{C}$ of legal vertex colorings defined by vertex-color constraints, does there exist a legal perfect matching of $G$ w.r.t. $\mathcal{C}$?
\label{problem:PMVC}
\end{problem} 

In the case of only polynomially many legal vertex colorings, e.g., ``all vertices have the same color'', the problem is easy, as shown in the following proposition. The idea is that for each vertex coloring $c$ in $\mathcal{C}$, we can call Blossom Algorithm on a subgraph of $G$ with only edges whose colors align with $c$. 
\begin{proposition}
If $|\mathcal{C}|$ is polynomially bounded by $|G|$, then EXISTS-PMVC is in PTIME. 
\label{prop:polynomiaClIsEasy}
\end{proposition}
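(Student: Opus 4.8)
The plan is to reduce PM-VC to polynomially many independent calls of the (unconstrained) Blossom Algorithm \cite{edmonds1965paths}, one per legal coloring. The key observation is that fixing a target inherited coloring $c\in\mathcal{C}$ completely determines which edges are admissible: an edge $e=\{(u,c_u^e),(v,c_v^e)\}$ can appear in a perfect matching $P$ with $c^P=c$ only if it induces the required color at both endpoints, i.e., $c_u^e=c(u)$ and $c_v^e=c(v)$.

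First I would, for each $c\in\mathcal{C}$, construct the subgraph $G_c=(V,E_c,d)$ by keeping exactly those edges $e$ that satisfy the consistency condition $c_u^e=c(u)$ and $c_v^e=c(v)$, and discarding all others; this is linear in $|E|$. I would then run the Blossom Algorithm on the ordinary (uncolored) graph underlying $G_c$ to decide whether $G_c$ has a perfect matching, and answer ``yes'' to PM-VC if and only if at least one $G_c$ does.

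The correctness argument rests on a clean correspondence. For the ``if'' direction, any perfect matching $P$ of $G_c$ is a perfect matching of $G$ (the vertex set is unchanged and $E_c\subseteq E$), and because every matched edge is consistent with $c$, its inherited coloring satisfies $c^P=c\in\mathcal{C}$, so $P$ is legal. Conversely, given any legal perfect matching $P$ we have $c^P\in\mathcal{C}$; taking $c=c^P$, the unique matched edge $e$ at each vertex $v$ satisfies $c_v^e=c^P(v)=c(v)$, so every edge of $P$ survives in $E_c$ and $P$ is a perfect matching of $G_c$. Hence $G$ has a legal perfect matching if and only if some $G_c$ does.

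For the complexity bound, there are $|\mathcal{C}|=\mathrm{poly}(|G|)$ iterations, each performing an $O(|E|)$ subgraph construction followed by a polynomial-time Blossom computation, so the total running time is polynomial in $|G|$, placing PM-VC in $P$. I do not anticipate a genuine obstacle here: the only point requiring care is checking that the edge-consistency condition correctly captures the inherited-coloring requirement at \emph{both} endpoints simultaneously, so that the perfect matchings of $G_c$ are precisely the legal perfect matchings whose inherited coloring equals $c$.
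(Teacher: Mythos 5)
Your proposal is correct and follows essentially the same route as the paper's own proof: for each legal coloring $c\in\mathcal{C}$, restrict to the subgraph $G_c$ of edges consistent with $c$ at both endpoints, run the Blossom Algorithm, and observe that legal perfect matchings with inherited coloring $c$ correspond exactly to perfect matchings of $G_c$. Your write-up is in fact slightly cleaner on the one point that matters — stating the consistency condition $c_u^e=c(u)$ and $c_v^e=c(v)$ explicitly at both endpoints, where the paper's definition of $E_c$ contains a typo.
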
 

The challenging part of EXISTS-PMVC is when $|\mathcal{C}|$ is exponential, such that it is infeasible to explicitly enumerate $\mathcal{C}$. In this case, $\mathcal{C}$ needs to be defined implicitly by vertex-color constraints. We study two representations of vertex-color constraints: symmetric constraints and decision diagrams.

\subsection{Symmetric Constraints and EXISTS-PMVC-Sym}
\begin{definition}
{\rm(Symmetric constraints)  }
For a system of vertex-color constraints, let the vertex coloring set it defines be $\mathcal{C}$. The system of vertex-color constraints is \emph{symmetric} if  whether a vertex coloring $c$ is in $\mathcal{C}$ is determined by the count of each color $i\in\{1,\cdots,d\}$ in $c$, denoted by $\texttt{count}(c,i)=|\{v|c(v)=i,v\in V\}|$. 
\end{definition}
For example, ``there is only one red vertex'' ($\texttt{count}(c,\texttt{red})=1$) and ``all vertices must be blue'' ($\texttt{count}(c,\texttt{blue})=n$) are symmetric, while ``$v$ must be blue'' is asymmetric.
\begin{problem} (Existence of \textbf{P}erfect \textbf{M}atching under \textbf{Sym}metric \textbf{V}ertex-\textbf{C}olor constraints, EXISTS-PMVC-Sym)
 Does a graph have a legal perfect matching w.r.t. a symmetric system of constraints? When the number of colors is bounded, we denote the problem by EXISTS-PMVC-Sym-Bounded.
\label{problem:EXISTS-PMVC-Sym}
\end{problem}

\begin{figure}[ht]
	\centering
\includegraphics[scale=0.19]{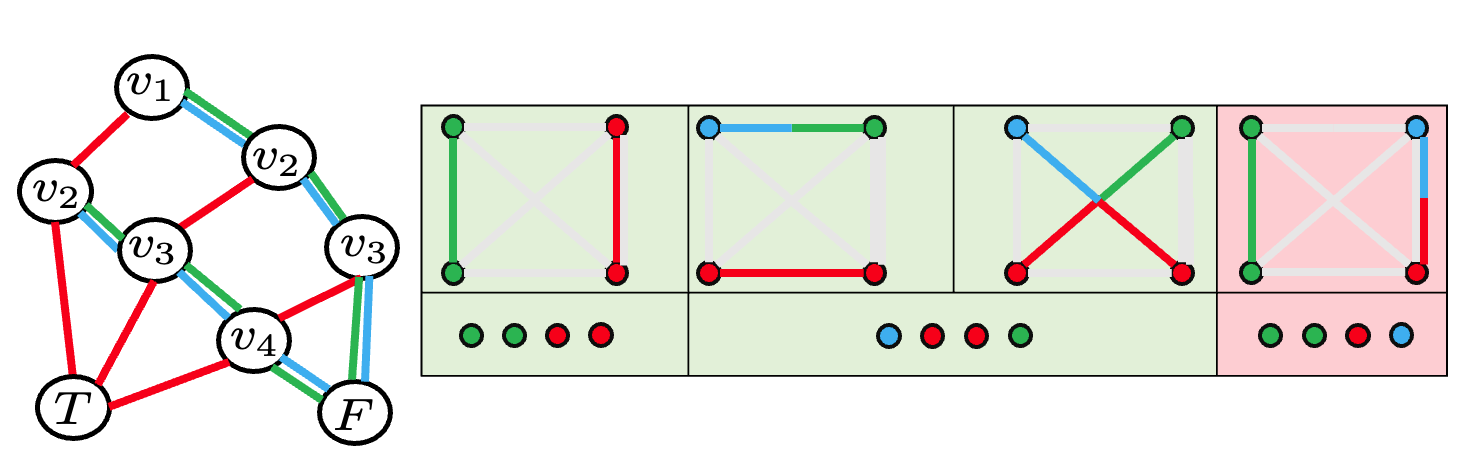}
\caption{Left: a DD encoding ``the number of red vertices is at least $2$'' on graphs with $4$ vertices. Right: for the graph in Figure \ref{fig:graph}, all perfect matchings except the last one are legal w.r.t. to the DD.}
\label{fig:dd}
\end{figure}

\subsection{Decision Diagram and EXISTS-PMVC-DD}
\begin{definition} {\rm (Decision Diagram)} 
A (reduced, ordered) Decision Diagram (DD), as a generalization of binary DD (BDD) \cite{bryant1992symbolic}, is a data structure that, in this paper, encodes a set of vertex colorings as a directed acyclic graph $D$. There are two terminal nodes with no outgoing edges, expressing \true\, and \false. Each non-terminal node is labeled with a vertex and also has $d$ outgoing edges. Each edge is labeled with a color index in $\{1,\cdots,d\}$. The DD is ordered, i.e., different vertices appear in the same order on all paths from the root.
\end{definition}

Whether a vertex coloring $c:V\to \{1,\cdots,d\}$ is legal w.r.t. the DD can be determined by DD evaluation. The evaluation starts from the unique root of the DD and at each non-terminal node labeled by $v\in V$, goes through the edge labeled by $c(v)$ to the next node, until a terminal node is reached. The coloring is legal iff. the reached terminal node is \true.

\begin{problem} (Existence of \textbf{P}erfect \textbf{M}atching under \textbf{D}ecision \textbf{D}iagram \textbf{V}ertex-\textbf{C}olor constraints, EXISTS-PMVC-DD)
 Does a graph  have a legal perfect matching w.r.t. a  decision diagram that encodes legal vertex colorings? When the number of colors is bounded, we denote the problem by EXISTS-PMVC-DD-Bounded.
\label{problem:EXISTS-PMVC-DD}
\end{problem} 

EXISTS-PMVC-DD subsumes EXISTS-PMVC-Sym, because systems of symmetric constraints always admit polynomial size DDs \cite{kyrillidis2021continuous}. 
When the number of colors is unbounded, both EXISTS-PMVC-Sym and EXISTS-PMVC-DD become NP-complete. The results can be obtained by reducing rainbow perfect matching (RPM) to EXISTS-PMVC-Sym and EXISTS-PMVC-DD. Given a graph with $n/2$ edge colors, RPM asks if there is a PM with all different colors. RPM is NP-hard since it can be reduced from 3-dimensional perfect matching (3DM), an NP-hard problem.


In the rest of this paper, we focus on EXISTS-PMVC with a bounded number of colors.

\begin{figure}[ht]
	\centering
\includegraphics[scale=0.2]{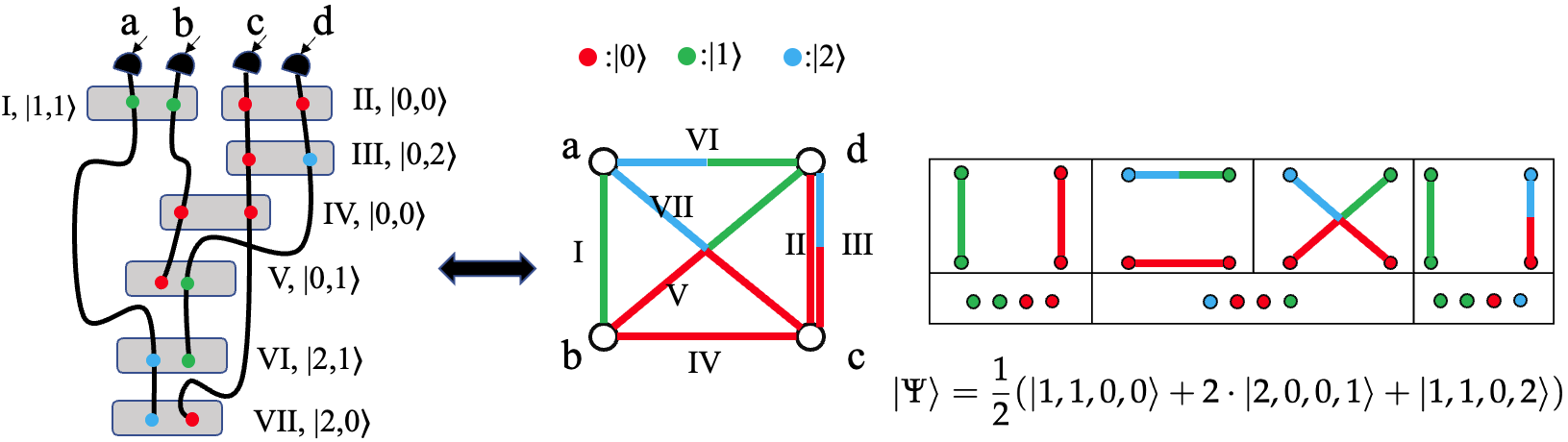}
\caption{Left: a quantum optical circuit corresponding to the graph in Figure \ref{fig:graph}. Each crystal (shaded box) correspond to an edge in graph. Each optical path is converted into a vertex in graph. The modes of photons translate to edge colors. Right: all PMs in the graph reflect the quantum state of the optical circuit as a superposition of coincidences.  \cite{gu2019quantum2}}
\label{fig:circuitPM}
\end{figure}

\section{Quantum Motivation of EXISTS-PMVC}
\label{section:quantumInterpretation}
This section describes the motivation of EXISTS-PMVC from the perspective of quantum-experiment design. An example of a (simplified) \emph{quantum optical circuit} is shown in Figure \ref{fig:circuitPM} (left). A non-linear \emph{crystal} (shaded box in Figure \ref{fig:circuitPM}) can emit a pair of entangled \emph{photons} simultaneously if activated by a laser-pump power. Each photon has a \emph{mode}, encoded by an integer index. Photons emitted by crystals travel on  \emph{optical paths}. A \emph{receiver}, which detects the arrival of a photon and identifies photon mode, is placed at the end of each optical path. Each crystal is associated with two optical paths \cite{gu2019quantum2}. 

A \emph{coincidence} of a quantum optical circuit happens when each receiver detects exactly one photon, due to the activation of some crystals. The \emph{quantum state} of a coincidence is the mode of photons caught by all receivers in the coincidence. The quantum state of an optical circuit is the \emph{superposition} \cite{ballentine2014quantum} of the quantum states of all coincidences. 

The optical circuit in Figure \ref{fig:circuitPM} (left) has $7$ nonlinear crystals. A coincidence happens only when crystals in \{I,II\}, or \{I,III\}, or \{IV,VI\} or \{V,VII\} are activated simultaneously.

In \cite{gu2019quantum2}, a coincidence of an optical circuit is shown to be equivalent to a perfect matching under vertex-color constraints in an undirected graph with bi-colored edges. The graph $G$ can be constructed as follows. See Figure \ref{fig:circuitPM} as an example.
\begin{bracketenumerate}
    \item Each optical path corresponds to a vertex in $G$.
    \item Each crystal corresponds to an edge in $G$. This edge connects two vertices corresponding to two optical paths which the crystal can emit photons to.
    \item The modes of photons emitted by crystals correspond to edge colors. Since a crystal can emit two photons with different modes, edges in $G$ are bi-colored.
    \item A coincidence of an optical circuit corresponds to a perfect matching $P$ in $G$. The set of edges in $P$ indicates activated crystals. The quantum state of the coincidence corresponds to the inherited vertex coloring of $P$.
    \item Each crystal has an amplitude as a complex number, corresponding to the edge weight in graph. The amplitude of a coincidence is the product of the amplitudes of all activated crystals. The weight of a perfect matching is the product of weights of all its edges. 
\end{bracketenumerate}

A quantum state of an optical circuit can be viewed as a superposition of coincidences with predefined ``legal'' quantum states. An optical circuit exhibits a quantum state if the total amplitude of all coincidences of each legal states is $1$, while all coincidences with an ``illegal'' state have total amplitude $0$ \cite{krenn2017quantum}. In the case where only real positive amplitudes are allowed, a necessary condition of exhibiting a quantum state is the non-existence of illegal coincidences. In the graph-theoretical setting, the necessary condition translates to the non-existence of illegal PMs, which can be refuted by EXISTS-PMVC algorithms. We list quantum states of interest and their corresponding vertex-color constraints in Table \ref{table:quantumState}.

	\begin{table}[ht]
		\centering
		\begin{small}
		\begin{tabular}{c c c c c}
			\toprule 
			Name \cite{gu2019quantum3}  & &  Quantum state & &  Vertex-color constraints  \\
			\cmidrule{1-1} \cmidrule{3-3} \cmidrule{5-5}
			GHZ State & & ${1}/{\sqrt{d}}\cdot\sum_{i=0}^{d-1}\ket{i}^{\oplus n}$ & & $\bigvee_{i=1}^d(\texttt{count}(c,i)=n)$ \vspace{0.2cm}   \\ 
			Dicke State && ${1}/{\sqrt{\binom{n}{k}}}\cdot\hat{S}(\ket{0} ^{\oplus (n-k)}\ket{1} ^{\oplus k})$ &&  $ \texttt{count}(c,1)=n-k$ \\ 
				W State && ${1}/{\sqrt{n}}\cdot\hat{S}(\ket{0} ^{\oplus (n-1)}\ket{1} ^{\oplus 1})$ &&  $ \texttt{count}(c,1)=n-1$   \\ 
					General Dicke State && $1/{\sqrt{\binom{n}{k_0,\cdots, k_{d-1}}}}\cdot\hat{S}(\ket{0}^{\oplus k_0}\cdots \ket{d-1}^{\oplus k_{d-1}})$ && $\bigwedge_{i=1}^d(\texttt{count}(c,i)=k_{i-1})$ \\ 
			\bottomrule
		\end{tabular} 
		\end{small}
		\caption{Quantum states and their corresponding vertex-color constraints on perfect matchings.}
		\label{table:quantumState}
	\end{table}
	
 The connection between constrained PM  and quantum computing has been proposed in a series of works \cite{gu2019quantum2,gu2019quantum3,krenn2019questions}. This line of research has inspired interesting theoretical open questions recently, e.g., Krenn's Conjecture on possible optical circuits that exhibit mono-chromatic GHZ state. While existing work focused on proving Krenn's Conjecture in  special cases \cite{chandran2022perfect} or verifying it up to a relatively small range \cite{cervera2021design}, investigation of complexity and algorithmic work for identifying other important quantum states besides GHZ, or arbitrary user-defined states  \cite{krenn2019questions} on large graphs is lacking.

\section{NP-Hardness of EXISTS-PMVC-DD-Bounded}
\label{sec:PMVCDD}
We prove that EXISTS-PMVC-DD with more than two colors is NP-hard, even for special classes of graphs listed in Corollary \ref{coro:specialGraph}. This result is interesting for two reasons. First, decision diagrams are considered tractable structures, whose satisfiability and counting are in polynomial time w.r.t. size of DD \cite{andersen1997introduction}. Our result reveals that when DD is combined with another polynomial problem, i.e., PM, the problem EXISTS-PMVC-DD becomes NP-hard. Second, the NP-hardness of EXISTS-PMVC-DD depends on neither exponential weighting nor an unbounded number of colors, unlike related constrained matching problems (see Remark \ref{remark:relatedProblems}) \cite{mastrolilli2014bi,busing2018budgeted}. We prove the following theorem by reducing 3-SAT to EXISTS-PMVC-DD with only red-blue mono-chromatic edges by the graph-gadget technique. 

\begin{theorem} EXISTS-PMVC-DD with number of colors $d\ge 2$ is NP-hard.
\label{theo:EXISTS-PMVC-BDD}
\end{theorem}

\begin{corollary}
   From the constructive proof of Theorem \ref{theo:EXISTS-PMVC-BDD},  EXISTS-PMVC-DD-Bounded on the following special classes of graphs is still NP-hard: 1) Graphs with bounded treewidth $\tw\ge 3$. 2) Non-simple bipartite graphs. 3) Graphs with only mono-chromatic edges.
   \label{coro:specialGraph}
\end{corollary}

\section{Algorithms for EXISTS-PMVC-Sym-Bounded}
We show that EXISTS-PMVC-Sym with a bounded number of colors is polynomially equivalent to XPM. As a result, EXISTS-PMVC-Sym-Bounded is in RP and RNC$^2$, which can also be solved deterministically in polynomial time on planar graphs by Pfaffian-orientation-based derandomization. Whether EXISTS-PMVC-Sym-Bounded is NP-hard is still open, same with XPM. The polynomial equivalence also suggests that one could use algorithms for XPM for solving  EXISTS-PMVC-Sym-Bounded. Nevertheless, the reduction from EXISTS-PMVC-Sym-Bounded to XPM comes with a significant overhead in the graph size. Therefore, we adapt symbolic-determinant algorithms for XPM to natively handle bi-colored graphs, with significantly better complexity compared with directly applying XPM algorithms. We also propose a dynamic-programming algorithm for graphs with bounded treewidth.
\label{sec:PMSym}

\subsection{EXISTS-PMVC-Sym-Bounded is Polynomially Equivalent to XPM}
We show the polynomial equivalence between EXISTS-PMVC-Sym-Bounded and a well-known graph theoretical problem called exact perfect matching (XPM).
\begin{problem} {\rm  \cite{elmaalouly2022exact} }
An Exact Perfect Matching (XPM)instance $(G,k)$ asks whether there exists a PM that contains exactly $k$ red edges in graph $G$ with blue/red-colored edges. 
\end{problem}
Strong evidence that XPM is not  NP-hard has been discovered, such as XPM being in RNC and having good approximation algorithms \cite{yuster2012almost}. Nevertheless, whether XPM is in P still remains one of the most intriguing open problems, standing open for more than $40$ years. 

  In practice, EXISTS-PMVC-Sym is more general than XPM in the sense that 1) there can be more than two colors and 2) bi-colored edges are allowed and 3) more types of constraints can be used.  Thus our result is interesting as it reveals that allowing bi-colored edges, increasing the number of colors from $2$ to a bounded number, and symmetric constraints do not lift the complexity of XPM. 

 \begin{theorem} EXISTS-PMVC-Sym-Bounded $=_P$ XPM.
 \label{conjecture:xpmAndSym}
\end{theorem}
\begin{proof}
\textbf{XPM $\le_P$ EXISTS-PMVC-Sym} with two colors and only monochromatic edges: The graph $G$ in EXISTS-PMVC-Sym remains the same. We only need to transform the cardinality constraint of edge colorings to a system of symmetric vertex-color constraints.  Since all edges are monochromatic, the set of perfect matchings with $k$ red edges equals the set of perfect matchings with $2k$ red vertices in their inherited vertex colorings. Thus the symmetric constraint for vertex coloring $c$ in EXISTS-PMVC-Sym is $\texttt{count}(c, \texttt{red})=2k$.

\textbf{EXISTS-PMVC-Sym-Bounded $\le_P$ XPM}: 
We first show that EXISTS-PMVC-Sym-Bounded on a graph $G$ with bi-colored edges can be reduced to EXISTS-PMVC-Sym-Bounded on a graph $G_1$ with only mono-chromatic edges. Given $G$ with $d$ colors, we construct such a $G_1$ with $d+1$ colors as follows. Each bi-colored edge $\{(u,c_u)(v,c_v)\}$ of $G$ is extended into a path in $G_1$ with length $3$: $\{(u,c_u),(u',c_u)\}$, $\{(u',d+1),(v',d+1)\}$, $\{(v',c_v),(v,c_v)\}$ (note that $d+1$ is a new color). Then we show that $G$ has a PM $P$ with color count $(n_1,\cdots,n_d)$ iff $G_1$ has a PM $P_1$ with color count $(n_1,\cdots,n_d, 2\cdot \#\texttt{BCE}(G))$, where $\#\texttt{BCE}(G)$ is the number of bi-colored edges in $G$. 
\begin{itemize}
    \item If $G$ has a PM $P$ with color count $(n_1,\cdots, n_d)$, then we construct a PM $P_1$ in $G_1$ as follows. For each bi-colored edge $e=\{(u,c_u),(v,c_v)\}\in P$, let $\{(u,c_u),(u_1,c_u)\}$ and $\{(v_1,c_v),(v,c_v)\}$ be in the PM $P_1$ of $G_1$. For each bi-colored edge $e=\{(u,c_u),(v,c_v)\}\not \in P$, let $\{(u_1,d+1),(v_1,d+1)\}$ be in $P_1$. For each monochromatic edge $e\in P$, we add $e$ to $P_1$.  It is easy to verify that the color count of $P_1$ is $(n_1,\cdots,n_d, 2\cdot \#\texttt{BCE}(G))$.
    \item If $G_1$ has a PM $P_1$ with color count $(n_1,\cdots,n_d, 2\cdot \#\texttt{BCE}(G))$,  then we construct a PM $P$ in $G$ as follows. For each edge $e\in P_1$, if both the vertices of $e$ are also in $G$, then we add $e$ to $P$. If $e$ is in the form of $\{(u,c_u),(u',c_u)\}$, then by the construction of $G_1$, $e$ is on a path of length $3$ in $G_1$: $\{(u,c_u),(u',c_u)\}$, $\{(u',d+1),(v',d+1)\}$, $\{(v',c_v),(v,c_v)\}$. We add the bi-colored edge $\{(u,c_u),(v,c_v)\}$ to $P$. The color count of $P$ is $(n_1,\cdots, n_d)$.
\end{itemize}
 
Next, we prove that EXISTS-PMVC-Sym-Bounded on a graph $G_1$ with only monochromatic edges and color count $(n_1,\cdots, n_d)$ can be reduced to an XPM instance $(G_2,\sum_{i=1}^{d}n_i\cdot (n/2)^{i-1})$. The graph $G_2$ is constructed as follows. Each edge of $G_1$ with color $i\in \{2,\cdots,d\}$ is extended to a path of length $2\cdot (n/{2})^{i-1}-1$ in $G_2$. The path is colored with  monochromatic red and blue edges alternately with $(n/2)^{i-1}$ red edges and $(n/2)^{i-1}-1$ blue edges, starting and ending with both red edges. Then we show that $G_1$ has a PM with color count $(n_1,\cdots,n_d)$ iff $G_2$ has a PM with $\sum_{i=1}^{d}n_i \cdot (n/2)^{i-1}$ red edges.

\begin{itemize}
    \item Suppose $G_1$ has a PM $P_1$ with color count $(n_1,\cdots, n_d)$. Then we construct a PM $P_2$ of $G_2$ as follows. For each edge $e\in P_1$ in $G_1$, put all red edges on the path that corresponds to $e$ in $G_2$ to $P_2$. For each edge $e\not\in P_1$, put all blue edges on the corresponding path in $G_2$ to $P_2$. Then one can verify that $P_2$ is a perfect matching of $G_2$ with $\sum_{i=1}^{d}n_i\cdot (n/2)^{i-1}$ red edges. 
    \item Suppose $G_2$ has a PM $P_2$ with $k=\sum_{i=1}^{d}n_i\cdot (n/2)^{i-1}$ red edges. We construct a PM $P_1$ as follows. By the construction of $G_2$, for each edge of $G_1$ with color $i$, there is a corresponding path with length $2\cdot (n/2)^{i-1}$ with $(n/2)^{i-1}$ red edges and $(n/2)^{i-1}-1$ blue edges in $G_2$. For each extended path in $G_2$, either all $(n/2)^{i-1}$ red edges all $(n/2)^{i-1}-1$ blue edges are in $P_2$. If all red edges in this path are in $P_2$, we include the corresponding edge in $G_1$ to $P_1$. It is clear that $P_1$ is a perfect matching of $G_1$. Next, we consider the color count  of $P_1$. Each extended path corresponding to an edge colored by $i$ in $G_1$ either contributes $(n/2)^{i-1}$ red edges or $0$  red edges to the PM $P_2$ in $G_2$. Since the count of each color in $P_1$ is bounded by $n/2$ and $k$ is bounded by $n/2\cdot (n/2)^{d-1}=(n/2)^d$, the color count of $P_1$ is given by the unique base-$(n/2)$ representation of $k$ with $d$ digits. i.e., $(n_1,\cdots,n_d)$ from the decomposition of $k=\sum_{i=1}^{d}n_i\cdot (n/2)^{i-1}$. 
\end{itemize}
Lastly, we show that both of the two reductions above are polynomial given $d$ is bounded. Let the vertex and edge set of $G$ be $V_G$ and $E_G$, respectively. Then the size of $G_1$ is bounded by $(|V_G|+2\cdot\#BCG(G), |E_G|+2\cdot \#BCG(G))$.  The size of $G_2$ is bounded asymptotically by $O(\frac{|E_{G_1}|}{d}\cdot (\frac{|V_{G_1}|}{2})^{d-1})=O((\frac{|V_G|+2\cdot |E_G|}{2})^{d-1}\cdot \frac{3\cdot |E_G|}{d})$. 
\end{proof} 

Since XPM is known to be in RNC$^2$ \cite{mulmuley1987matching}, EXISTS-PMVC-Sym-Bounded is also in RNC$^2$.  The RNC$^2$ algorithm of XPM uses $n^{3.5}$ processors. Thus directly applying the best-known algorithm for EXISTS-PMVC-Sym-Bounded can yield an RNC$^2$  algorithm with $O(n^{7d})$ processors (or $O(n^{7d}\cdot log^2n)$ time complexity for the sequential algorithm) in the worst case (on a dense graph where $|E|=O(|V|^2)$), which is prohibitively expensive. In the later part of this section, we propose two sequential algorithms for EXISTS-PMVC-Sym-Bounded that natively handles bi-colored graphs. The first is randomized, based on symbolic determinant with complexity $O(n^{2d+1})$ and the second is deterministic, based on dynamic programming with complexity $O(n^{2d}\cdot(2d+2)^{\tw +1})$.

\begin{remark} {\rm (Related problems)}
\label{remark:relatedProblems}
Bounded color matching (Bounded-CM)  \cite{mastrolilli2014bi} and a more general problem, budget colored matching (Budgeted-CM) \cite{busing2018budgeted} assign an integer cost and weight to each edge, and a budget to each color. The goal is to find a matching with maximum total weight within the edge color budget. Bounded-CM and Budgeted-CM are proved strongly NP-hard \cite{rusu2008maximum} if the number of colors is unbounded. Multi-budgeted matching \cite{busing2018multi} further generalizes Budgeted-CM by additionally assigning each edge a real vector cost and enforcing linear constraints on maximum matchings, which makes the problem  strongly NP-hard even for paths with an unbounded number of colors. 

EXISTS-PMVC subsumes cases of Bounded-CM and Budgeted-CM where all edges have unit weight and only perfect instead of maximum matchings are valuable. Note that in those related problems, either an unbounded number of colors or exponential weighting is necessary for proving NP-hardness, while EXISTS-PMVC-DD is NP-hard even with two colors. 
\end{remark}

\subsection{Algorithms Based on Symbolic Matrix Determinant} We adapt the symbolic-determinant approach for XPM to solve EXISTS-PMVC-Sym and obtain an RNC$^2$ algorithm. We further investigate the detailed complexity of our approach as a sequential algorithm via fraction-free Gaussian Elimination.
\label{subsection:RPAlgorithm}
\subsubsection{Tutte Matrix for PMs in Simple, Uncolored Graphs}
\label{subsubsec:tutte}
We recall the RP algorithm for perfect matching based on Tutte Matrix \cite{mulmuley1987matching}.

\begin{definition} {\rm (Tutte Matrix) \cite{tutte1947factorization}}The following skew-symmetric matrix $T$ is called the \emph{Tutte Matrix} of a simple, uncolored graph $G$, 
where all $x_{uv}$ are symbols.

\begin{equation}\nonumber
T_{uv} = 
    \begin{cases}
          x_{uv}, &\text{if  $(u,v)\in E$ and $u>v$ }\\
          -x_{uv}, &\text{if  $(u,v)\in E$ and $u<v$ }\\
          0,  &\text{otherwise}
    \end{cases}
\end{equation}
\end{definition}

The most important application of Tutte Matrix is perfect matching.

\begin{theorem} {\rm \cite{tutte1947factorization}}
\label{theorem:tutte}
A graph $G$ has a perfect matching iff the determinant of its Tutte matrix is non-zero, i.e.,  $det(T)\not\equiv 0$.
\end{theorem}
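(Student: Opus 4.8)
The plan is to analyze the Leibniz expansion $\det(T)=\sum_{\sigma\in S_n}\operatorname{sgn}(\sigma)\prod_{i}T_{i,\sigma(i)}$ combinatorially and show that, after cancellation, the surviving monomials are in bijection with the perfect matchings of $G$. First I would observe that a term is nonzero only if $(i,\sigma(i))\in E$ for every $i$, so the nonvanishing permutations are exactly the cycle covers of $G$; moreover, since the diagonal of $T$ is zero, any $\sigma$ with a fixed point contributes $0$, and we may restrict attention to fixed-point-free permutations.

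The heart of the argument is a sign-reversing involution that eliminates every permutation containing an odd cycle. Given a fixed-point-free $\sigma$ possessing at least one odd cycle, I would select the odd cycle through the smallest vertex lying on some odd cycle and reverse its direction, obtaining $\sigma'$. Reversing a cycle leaves $\operatorname{sgn}(\sigma)$ unchanged, but by the skew-symmetry $T_{uv}=-T_{vu}$ each of the (odd number of) arcs on the reversed cycle flips sign, so $\prod_i T_{i,\sigma'(i)}=-\prod_i T_{i,\sigma(i)}$. Since the canonical choice makes $\sigma\mapsto\sigma'$ a fixed-point-free involution, the paired terms cancel, leaving only permutations all of whose cycles are even.

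Next I would match the surviving even cycle covers to perfect matchings. If $G$ has a perfect matching $M$, then viewing each edge $\{u,v\}\in M$ as a $2$-cycle yields a fixed-point-free permutation $\sigma_M$ with only length-$2$ cycles, whose term is $\pm\prod_{\{u,v\}\in M}x_{uv}^2$. Crucially, I would argue that this squared monomial pins down $\sigma_M$ uniquely: any $\sigma$ contributing to $\prod_{e\in M}x_e^2$ must send each vertex to its $M$-partner, because each edge variable is permitted to occur exactly twice and $M$ is a matching. Hence the coefficient of this monomial equals $\pm 1\neq 0$, so $\det(T)\not\equiv 0$. Conversely, if no perfect matching exists then there is no all-even cycle cover (an even cycle cover would yield a matching by taking alternate edges of each cycle), so after the involution no term survives and $\det(T)\equiv 0$.

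The step I expect to be most delicate is making the involution genuinely well defined and fixed-point-free: one must choose the cycle to reverse by a rule invariant under the reversal itself (reversal does not alter the vertex set of that cycle, hence not the ``smallest vertex on an odd cycle''), and one must confirm that reversing a cycle of length $\geq 3$ always produces a distinct permutation, while the forbidden length-$1$ case is already excluded by the vanishing diagonal. The sign bookkeeping---that the permutation sign is preserved yet the monomial sign is multiplied by $(-1)^{\text{odd}}=-1$---is the other point requiring careful verification.
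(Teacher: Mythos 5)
Your proof is correct and is essentially the standard argument the paper relies on: the paper states Theorem~\ref{theorem:tutte} as a cited result, but its own proof of the generalization (Lemma~\ref{lemma:PIT}) uses exactly your strategy of a sign-reversing involution that reverses an odd cycle to cancel all odd-cycle permutations, leaving even cycle covers that correspond to perfect matchings. Your added care about making the involution canonical and about the unique permutation contributing the monomial $\prod_{e\in M}x_e^2$ is sound and, if anything, slightly more detailed than the paper's sketch.
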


Symbolic determinant is hard to compute as there can be $O(n!)$ terms in $det(T)$. However, since by Theorem \ref{theorem:tutte} one only needs to know whether $det(T)$ is non-zero to identify existence of perfect matching, polynomial identity testing (PIT) can be applied to achieve this goal.

\begin{lemma} {\rm (Schwartz-Zippel Lemma for PIT) \cite{saxena2009progress}} \label{prop:constant_testing} Let $F:\mathbb{R}^n\to \mathbb{R}$ be a non-zero polynomial and $S\subset \mathbb{R}$ be a finite set with $|S|>n$.  If we pick a point $a=(a_{1},\cdots, a_{n})$, where each element is sampled independently and uniformly at random from $S$, then we have
	$\mathop{\mathbb{P}}\limits_{a\sim \mathcal{U}[S]^n}[F(a)=0]\le \frac{n}{|S|}.$
\end{lemma}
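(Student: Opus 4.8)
The plan is to prove the standard Schwartz--Zippel bound by induction on the number $n$ of variables, establishing the slightly sharper statement that $\mathbb{P}[F(a)=0]\le \deg(F)/|S|$. The bound $n/|S|$ claimed in the lemma then follows in every regime of interest, since the only polynomials to which we apply it are (Tutte-type) determinants, which by Theorem~\ref{theorem:tutte} are sums of products of $n$ matrix entries and hence have total degree at most $n$; together with $|S|>n$ this makes the bound nontrivial. So I would prove the degree version and close by specializing $\deg(F)\le n$.

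For the base case $n=1$, $F$ is a nonzero univariate polynomial of degree $d=\deg(F)$, which over $\mathbb{R}$ has at most $d$ roots. Hence among the $|S|$ equally likely choices of $a_1\in S$, at most $d$ satisfy $F(a_1)=0$, giving $\mathbb{P}[F(a_1)=0]\le d/|S|$. For the inductive step I would collect $F$ as a polynomial in its last variable, writing $F(x_1,\dots,x_n)=\sum_{i=0}^{k} x_n^{\,i}\,F_i(x_1,\dots,x_{n-1})$ where $k$ is the largest exponent of $x_n$ that occurs and the leading coefficient $F_k\not\equiv 0$, so that $\deg(F_k)\le d-k$. Splitting on whether this leading coefficient vanishes: by the inductive hypothesis applied to the $(n-1)$-variable polynomial $F_k$, $\mathbb{P}[F_k(a_1,\dots,a_{n-1})=0]\le (d-k)/|S|$; and conditioned on $F_k(a_1,\dots,a_{n-1})\ne 0$, the univariate polynomial $g(x_n)=F(a_1,\dots,a_{n-1},x_n)$ is nonzero of degree exactly $k$, so the base case yields $\mathbb{P}[g(a_n)=0]\le k/|S|$. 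Combining by the law of total probability gives
\[
\mathbb{P}[F(a)=0]\le \frac{d-k}{|S|}+\frac{k}{|S|}=\frac{d}{|S|},
\]
which completes the induction.

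The step I expect to require the most care is the conditioning in the inductive step. The argument relies on the coordinates being sampled \emph{independently}: once $(a_1,\dots,a_{n-1})$ is fixed to a value making $F_k$ nonzero, $a_n$ must still be uniform on $S$ and independent of that conditioning, which is exactly what lets me treat $g$ as a fixed nonzero univariate polynomial and invoke the base case. This independence is guaranteed by the sampling model in the statement. A secondary bookkeeping point is reconciling the proved $\deg(F)/|S|$ with the stated $n/|S|$; I would dispatch this at the end by recording that $\deg(F)\le n$ for the determinantal polynomials of Section~\ref{subsubsec:tutte}, so the lemma as stated is the specialization actually used by the RNC algorithm.
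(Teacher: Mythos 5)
The paper offers no proof of this lemma at all: it is imported verbatim from the cited reference and used as a black box, so there is no in-paper argument to compare against. Your induction is the standard and correct proof of the Schwartz--Zippel bound $\Pr[F(a)=0]\le \deg(F)/|S|$: the base case counts roots of a nonzero univariate polynomial, and the inductive step correctly splits on whether the leading coefficient $F_k$ of $x_n$ vanishes, with the conditioning justified exactly as you say by the independence and uniformity of the coordinate samples. The law-of-total-probability combination $(d-k)/|S|+k/|S|=d/|S|$ is the right bookkeeping.

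The one place where I would push back is your closing reconciliation of $\deg(F)/|S|$ with the stated $n/|S|$, where $n$ is the \emph{number of variables}. These two quantities are not comparable in general, and your claim that the determinantal polynomials of Section \ref{subsubsec:tutte} have total degree at most the number of variables is not literally true: $\det(T)$ has $|E|$ variables $x_{uv}$ but total degree equal to the number of vertices, and for a sparse graph (e.g.\ a disjoint union of edges, where $|E|$ is half the number of vertices) the degree exceeds the variable count. So the lemma in the paper's variable-count form does not follow from the degree form you prove, and indeed is false as a general statement about polynomials (consider a univariate polynomial of high degree). This is harmless for the application --- with $|S|=c\cdot|E|$ the failure probability is still $O(1/c)$ since the degree is at most $2|E|$ --- but the honest fix is to state and invoke the lemma in its degree form, with the set size chosen as a constant multiple of $\deg(F)$, rather than to argue that the paper's statement is a consequence of it.
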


There are $|E|$ variables ($x_{uv}$) in $det(T)$ as a polynomial. Therefore, one can use an integer set with size $c\cdot |E|$ in PIT, where $c$ is a non-zero constant, to guarantee that if a PM exists, the probability of identifying it successfully by PIT is at least $1-1/c$. Computing the determinant of an $n\times n$ numerical matrix can be done in $O(n^3)$ by Gaussian Elimination, or in $O(n^{2.37286})$ by the best-known matrix multiplication algorithm \cite{alman2021refined}. 

\subsubsection{Adapting Tutte Matrix for Bi-Colored Graphs}
  Generalizing  the symbolic-determinant-based approach for XPM \cite{mulmuley1987matching},  we adapt Tutte Matrix to non-simple graphs with bi-colored edges, denoted by $A$ in Definition \ref{defn:matrixA}. Besides $x_{uv}$, for each color $i\in \{1,\cdots,d\}$, we introduce a symbol $y_i$.

\begin{definition} {\rm (Adapted Tutte Matrix)} Given a graph $G=(V,E,d)$ with bi-colored edges, we define the following matrix $A$. For two vertices $u$, $v$, $E_{u,v}$ denotes the set of all edges connecting $u$ and $v$.

\begin{equation}\nonumber
A_{uv} = 
    \begin{cases}
         x_{uv}\cdot  \sum\limits_{e\in E_{u,v}} y_{c_u^e}\cdot y_{c_v^e}, &\text{if  $u>v$ }\\
          - x_{uv}\cdot  \sum\limits_{e\in E_{u,v}} y_{c_u^e}\cdot y_{c_v^e}, &\text{if  $u<v$ }\\
          0,  &\text{otherwise}
    \end{cases}
\end{equation}
\label{defn:matrixA}
\end{definition}

 For example, in the graph shown in Figure \ref{fig:graph}, we have  $A_{v_3v_4}=-(y_{r}^2+y_{r}y_{b})\cdot x_{v_3v_4}.$
 
    Theorem \ref{theo:PIT} shows that the approach based on PIT and determinant in Sec. \ref{subsubsec:tutte}  can be adapted to bi-colored graphs by matrix $A$. The key observation is that legal PMs w.r.t. symmetric vertex-color constraints correspond to certain terms in the polynomial representations of $\sqrt{det(A)}$, or  the \emph{Pfaffian} \cite{galluccio1999theory} of $A$. 
 
\begin{definition} {\rm (Legal terms)}
Let $\mathcal{C}$ be a set of vertex colorings defined by symmetric constraints, the set of \emph{legal terms} is defined by
$
\mathcal{M}_\mathcal{C}=\{\prod_{i=1}^d (y_i)^{ \texttt{count}(c,i) }| c\in \mathcal{C}\}.
$
\end{definition}

For the example in Figure \ref{fig:dd}, $\mathcal{M}_\mathcal{C}=\{y^4_{r},y^3_ry_b,y^3_ry_g,y^2_{r}y_{b}^2,y^2_{r}y_{g}^2,y^2_{r}y_{b}y_{g}\}$. Then, analogous to Theorem \ref{theorem:tutte}, we have the following results. 

\begin{lemma}
A graph has a perfect matching in $\mathcal{C}$ defined by symmetric constraints iff $\sqrt{det(A)}$ has a term in $\mathcal{M}_\mathcal{C}$ with non-zero coefficient.
\label{lemma:PIT}
\end{lemma}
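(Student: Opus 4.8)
The plan is to lift the proof of Tutte's theorem (Theorem~\ref{theorem:tutte}) by using the $y$-variables to record the inherited vertex coloring of each matching. I would start from the identity $\det(A)=\mathrm{Pf}(A)^2$, valid for every skew-symmetric matrix of even order (and $\det(A)=0$ for odd order, which already accounts for graphs with no perfect matching). Expanding the Pfaffian over perfect matchings and discarding the vanishing entries of non-adjacent pairs gives
\[
\mathrm{Pf}(A)=\sum_{P}\mathrm{sgn}(M_P)\Big(\prod_{\{u,v\}\in M_P}x_{uv}\Big)\prod_{i=1}^{d} y_i^{\texttt{count}(c^P,i)},
\]
where $P$ ranges over the perfect matchings of $G$ and $M_P$ is its underlying set of matched pairs. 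The calculation I would carry out first is the $y$-exponent: each chosen edge $e=\{(u,c_u^e),(v,c_v^e)\}$ contributes $y_{c_u^e}y_{c_v^e}$, and since $P$ covers every vertex exactly once, the exponent of $y_i$ equals the number of vertices that $P$ colors $i$, i.e.\ $\texttt{count}(c^P,i)$. Squaring to obtain $\det(A)$ then doubles these exponents on the diagonal, which is precisely the factor of $2$ appearing in the legal terms $\mathcal M_{\mathcal C}$.

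For the forward direction I would group $\det(A)=\mathrm{Pf}(A)^2$ by its $x$-monomials. A legal matching $P$ (with $c^P\in\mathcal C$) contributes, through the diagonal pair $(P,P)$, the monomial $\prod_i y_i^{2\,\texttt{count}(c^P,i)}\in\mathcal M_{\mathcal C}$ with coefficient $\mathrm{sgn}(M_P)^2=+1$. Collecting the terms that share a common squared $x$-monomial yields a polynomial in $y$ with non-negative coefficients, so this legal monomial cannot be cancelled and its coefficient stays non-zero; hence a legal matching forces a legal term.

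The converse is the step I expect to be the main obstacle. A legal $y$-monomial $\prod_i y_i^{2k_i}$ can be produced not only by a single matching with $\texttt{count}(c^P,i)=k_i$, but also by a \emph{cross term} from two distinct matchings $P\neq P'$ whose counts merely average to $(k_i)_i$. Such cross terms are most dangerous when several parallel edges join the same vertex pair with different colorings, because then $P$ and $P'$ share all their pair-variables $x_{uv}$ and the spurious monomial survives with positive coefficient; in fact this suggests that the per-pair variables of Definition~\ref{defn:matrixA} are, on their own, too coarse to make the ``iff'' hold. My plan to close the argument is to give each \emph{edge} its own variable and to examine only the part of $\det(A)$ whose $x$-component is a perfect square: I would prove that $\big(\prod_{e\in P}x_e\big)\big(\prod_{e'\in P'}x_{e'}\big)$ is a perfect square exactly when $P=P'$ as edge sets, so that this part equals $\sum_{P}\big(\prod_{e\in P}x_e\big)^2\prod_i y_i^{2\,\texttt{count}(c^P,i)}$ with no interference from cross terms. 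On this perfect-square part a monomial of $\mathcal M_{\mathcal C}$ appears with non-zero coefficient iff some matching realizes the counts $(k_i)_i$, i.e.\ iff a legal matching exists. Isolating this refinement, and checking that the subsequent PIT step of Theorem~\ref{theo:PIT} only ever reads off such perfect-square terms, is in my view the crux of the whole lemma.
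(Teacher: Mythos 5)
Your instinct about the converse is right, and it is worth saying plainly: the cross-term obstruction you describe is not a hypothetical difficulty but an actual failure of the paper's own argument. The paper proves the converse by cancelling permutations with odd cycles and then asserting that each surviving even cycle ``corresponds to a perfect matching'' whose inherited coloring supplies the exponents of the $y_i$. But an even cycle of length at least $4$ decomposes into \emph{two} alternating matchings, and the $y$-exponents of that permutation's term record the multiset union of both inherited colorings, not twice the coloring of a single matching. Concretely, take the $4$-cycle $v_1v_2v_3v_4$ with $e_{12},e_{34}$ monochromatic red and $e_{23},e_{14}$ monochromatic blue, and let $\mathcal{C}$ demand exactly two red vertices. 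Then $\det(A)=\bigl(x_{12}x_{34}y_r^4+x_{14}x_{23}y_b^4\bigr)^2$, whose middle term $2x_{12}x_{34}x_{14}x_{23}\,y_r^4y_b^4$ lies in $\mathcal{M}_\mathcal{C}$ with non-zero coefficient, yet the only two perfect matchings are all-red and all-blue. So Lemma \ref{lemma:PIT} fails as literally stated; the last two sentences of the paper's proof of its $\Rightarrow$ direction are exactly where it breaks, and the PIT procedure of Theorem \ref{theo:PIT} inherits the same false positive. Your separate observation about parallel edges shows the same failure already for $2$-cycles under the per-pair variables of Definition \ref{defn:matrixA}.

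Your proposed repair is the right one and genuinely different from what the paper does: give each edge its own indeterminate $x_e$, write $\det(A)=\mathrm{Pf}(A)^2$, and read off only the part of $\det(A)$ whose $x$-monomial is a perfect square. Since a matching contains each edge at most once, $\prod_{e\in P}x_e\cdot\prod_{e'\in P'}x_{e'}$ is a square iff $P=P'$ as edge sets, so the perfect-square part is exactly $\sum_P\bigl(\prod_{e\in P}x_e\bigr)^2\prod_i y_i^{2\cdot\texttt{count}(c^P,i)}$ with all coefficients $+1$, and the corrected equivalence holds. Your forward direction (the diagonal contribution of a legal $P$ cannot be cancelled because everything sharing its squared $x$-monomial has non-negative coefficients) is also sound, and in fact survives even with the paper's coarser per-pair variables. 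The remaining work you flag is real and should not be waved away: once the $x$'s are numerically substituted for polynomial identity testing one can no longer isolate the perfect-square part, so Theorem \ref{theo:PIT} cannot simply test $y$-coefficients of $\det(A)$; one has to fall back on something like the isolation-lemma machinery the paper already deploys for Theorem \ref{theorem:findingPM}, where uniqueness of the minimum-weight legal matching is precisely what suppresses the cross terms.
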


\begin{theorem}
The decision problem of EXISTS-PMVC-Sym with $d$ colors can be solved in $O( \texttt{DET}(n,d) \cdot \log \frac{1}{\epsilon})$ with probability $1-\epsilon$, where $\texttt{DET}(n,d)$ is the complexity of computing the symbolic determinant of an $n\times n$ matrix with $d$ symbols.
\label{theo:PIT}
\end{theorem}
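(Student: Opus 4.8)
The plan is to combine Lemma~\ref{lemma:PIT} with the Schwartz--Zippel test (Lemma~\ref{prop:constant_testing}), where the twist relative to the uncolored case is that we keep the $d$ color-symbols $y_1,\dots,y_d$ \emph{formal} while evaluating only the edge-symbols $x_{uv}$ at random integers. By Lemma~\ref{lemma:PIT}, deciding PM-VC-Sym is the same as deciding whether $\det(A)$ contains some monomial of $\mathcal{M}_\mathcal{C}$ with non-zero coefficient. Writing $\det(A)=\sum_m p_m(x)\cdot m$, where $m$ ranges over the monomials in the $y_i$ and each $p_m$ is a polynomial in the $x_{uv}$, this is equivalent to asking whether $p_m\not\equiv 0$ for some legal $m\in\mathcal{M}_\mathcal{C}$.

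First I would sample each $x_{uv}$ independently and uniformly from an integer set $S$ of size, say, $2|E|$, and substitute these values into $A$. The resulting matrix has entries that are polynomials in the $d$ symbols $y_1,\dots,y_d$ only, so its determinant $Q(y_1,\dots,y_d)=\det(A)\big|_{x=a}$ can be computed symbolically in time $\texttt{DET}(n,d)$, by definition of that quantity. Crucially, the coefficient of each legal monomial $m$ in $Q$ is exactly $p_m(a)$, so I would then simply scan the monomials of $Q$, test each for membership in $\mathcal{M}_\mathcal{C}$ (a check that depends only on the $y$-exponents, i.e.\ on the color counts, since the constraints are symmetric), and answer ``yes'' iff some legal monomial survives with non-zero coefficient.

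For correctness the error is one-sided. If $G$ has no legal PM, then every legal $p_m\equiv 0$, so $p_m(a)=0$ for every choice of $a$ and the algorithm never produces a false positive. If $G$ does have a legal PM, then some legal $p_m\not\equiv 0$, and since the $x$-degree of $\det(A)$ is $O(n)$, Lemma~\ref{prop:constant_testing} gives $\mathbb{P}[p_m(a)=0]\le O(n)/|S|\le 1/2$; hence a single trial succeeds with probability at least $1/2$. Repeating $O(\log\frac{1}{\epsilon})$ independent trials and reporting ``yes'' as soon as one trial does would drive the failure probability below $\epsilon$, yielding the claimed $O(\texttt{DET}(n,d)\cdot\log\frac{1}{\epsilon})$ running time; the per-monomial legality scan is dominated by the determinant computation, as $Q$ has at most $\binom{2n+d}{d}$ monomials.

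The main obstacle, and the step deserving the most care, is the hybrid symbolic/numeric evaluation: I must argue that randomizing only the $x$-variables preserves, with high probability, the non-vanishing of the coefficient of each legal $y$-monomial, and that no spurious cancellation can create a false positive. This rests on the fact that distinct $y$-monomials remain linearly independent after the substitution (so the $p_m$ cannot interfere with one another), together with a clean bound on the $x$-degree of $\det(A)$ to feed into Schwartz--Zippel. A secondary point worth stating explicitly is that the scheme is efficient precisely because the number of color symbols is $d$: this is what keeps both the symbolic determinant $\texttt{DET}(n,d)$ and the monomial count of $Q$ under control, and it is the reason the bounded-color assumption resurfaces in the subsequent RNC/NC claims.
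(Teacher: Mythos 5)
Your proposal follows essentially the same route as the paper: randomize the $x_{uv}$ variables, compute the determinant symbolically in the $d$ color symbols $y_1,\dots,y_d$, check for a surviving legal monomial of $\mathcal{M}_\mathcal{C}$, and amplify over $O(\log\frac{1}{\epsilon})$ independent trials. In fact you make explicit the one-sided error analysis and the key point that the coefficient of each $y$-monomial after substitution equals $p_m(a)$, which the paper's own (very terse) proof leaves implicit.
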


 Symbolic determinant with a bounded number of symbols was shown to be in NC$^2$ \cite{borodin1983parallel}, indicating again that EXISTS-PMVC-Sym-Bounded is in RNC$^2$. Nevertheless, due to the potential multiplication and division of rational functions, bounding the complexity of $\texttt{DET}(n,d)$ by a relatively simple algorithm is non-trivial. We give an upper bound for the sequential complexity of $\texttt{DET}(n,d)$ by fraction-free Gaussian Elimination \cite{bareiss1968sylvester}.

\begin{lemma}
The symbolic polynomial representation of the  determinant of an $n\times n$ matrix with $d$ symbols can be solved in $O(n^{2d+1})$ by fraction-free Gaussian Elimination.
\label{lemma:fractionFree}
\end{lemma}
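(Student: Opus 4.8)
The plan is to analyze the fraction-free Gaussian Elimination (Bareiss algorithm) over the ring of polynomials in $d$ symbols $y_1,\dots,y_d$, treating the entries of $A$ as elements of this polynomial ring rather than of a field. First I would recall the structure of the Bareiss algorithm: unlike ordinary Gaussian Elimination, which introduces rational functions through division, the fraction-free version maintains that every intermediate entry is exactly a minor (a determinant of a submatrix) of the original matrix. At step $k$, each updated entry is computed via the identity $A^{(k)}_{ij} = (A^{(k-1)}_{kk}A^{(k-1)}_{ij} - A^{(k-1)}_{ik}A^{(k-1)}_{kj}) / A^{(k-2)}_{k-1,k-1}$, and the key algebraic fact is that this division is exact in the polynomial ring, so no denominators ever persist. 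This means all arithmetic happens with honest polynomials, and the final entry is $\det(A)$ itself.

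The central quantitative step is to bound the number of monomials that can appear in any intermediate polynomial. Here I would use the fact that each entry $A_{uv}$ of the adapted Tutte matrix is homogeneous of degree exactly $2$ in the $y$-symbols (each edge contributes $y_{c_u^e}y_{c_v^e}$), after factoring out the $x_{uv}$ bookkeeping symbols. Consequently any $k\times k$ minor is homogeneous of degree $2k$ in the $y$-variables, so the intermediate polynomials live in the space of homogeneous degree-$2k$ polynomials in $d$ variables. The number of such monomials is $\binom{2k+d-1}{d-1} = O(k^{d-1})$ for fixed $d$, and over the whole elimination $k$ ranges up to $n$, giving $O(n^{d-1})$ monomials per entry. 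I would want to handle the $x$-symbols separately; since each $x_{uv}$ appears linearly and we only ever track which edges are used, these can either be specialized to distinct primes/random values via Schwartz-Zippel as in Theorem~\ref{theo:PIT}, or their degree can be absorbed without changing the asymptotic monomial count, so the $y$-degree is what governs the bound.

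With the monomial count in hand, the cost estimate follows by bookkeeping. Each of the $O(n^2)$ entries is updated across $O(n)$ elimination steps, and each update multiplies and subtracts polynomials with $O(n^{d-1})$ monomials each; multiplying two such polynomials costs $O(n^{2(d-1)})$ monomial operations (or less with fast multiplication, but the naive bound suffices). Multiplying the per-update cost $O(n^{2(d-1)})$ by the $O(n^2)\cdot O(n) = O(n^3)$ entry-updates would give $O(n^{2d+1})$, matching the claimed bound; I would verify that the exact-division step does not dominate, since dividing homogeneous polynomials of bounded monomial count is no more expensive than the multiplication. Assembling these pieces yields the $O(n^{2d+1})$ bound.

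The main obstacle I anticipate is rigorously justifying the exactness of the division in the Bareiss identity \emph{within the specific polynomial ring at hand}, and ensuring the homogeneity argument interacts correctly with the leading-minor structure the algorithm relies on. The classical Bareiss correctness theorem is stated for integral domains, and $\mathbb{Z}[y_1,\dots,y_d]$ (or with the $x$-symbols adjoined) is indeed an integral domain, so the sub-resultant/minor identity applies; but I would need to be careful that the pivots $A^{(k-1)}_{kk}$ are nonzero as polynomials (which may require a symbolic permutation or a mild genericity argument, or invoking that division exactness is a formal polynomial identity independent of whether any particular pivot specializes to zero). A secondary subtlety is confirming that the homogeneous degree bound of $2k$ is tight enough and that no cancellation-free growth in the $x$-variables inflates the count — but since we care only about the $y$-structure for the legal-term test of Lemma~\ref{lemma:PIT}, the $x$-degrees can be kept bounded, and this is where I would focus the careful part of the write-up.
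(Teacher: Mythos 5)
Your proposal is correct and follows essentially the same route as the paper's proof: both run Bareiss fraction-free elimination, show each intermediate entry is zero or homogeneous of degree $2k$ in the $d$ symbols (hence has $O(n^{d-1})$ monomials), and multiply the resulting $O(n^{2d-2})$ cost per polynomial multiplication/division by the $O(n^3)$ such operations to obtain $O(n^{2d+1})$. The only cosmetic difference is that you derive the degree bound from the fact that intermediate entries are minors, whereas the paper proves it by induction on the Bareiss recurrence; your caveat about possibly vanishing pivots is a legitimate concern, but the paper's own proof does not treat it any more carefully than you do.
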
 

It is  not yet obvious how to actually obtain such a PM, which is often desirable.  An elegant solution was proposed in \cite{mulmuley1987matching} for obtaining a PM in XPM, based on isolating lemma. We extend this method for obtaining a legal PM in EXISTS-PMVC-Sym by Theorem \ref{theorem:findingPM}. 
\begin{theorem}
\label{theorem:findingPM} There exists a polynomial randomized algorithm such that if a graph has a PM in the  set of legal vertex colorings $\mathcal{C}$ defined by a system of symmetric constraints, it returns a legal perfect matching with probability at least $1/2$. 
\end{theorem}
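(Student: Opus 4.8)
\textbf{Proof Proposal for Theorem~\ref{theorem:findingPM}.}
The plan is to adapt the Isolating Lemma technique of Mulmuley, Vazirani and Vazirani to the bi-colored setting, using the adapted Tutte matrix $A$ from Definition~\ref{defn:matrixA}. The overall strategy is: first isolate a unique legal perfect matching by random edge weights so that it can be read off from a single determinant computation, and then recover that matching edge by edge. By Lemma~\ref{lemma:PIT}, a legal PM exists iff $det(A)$ contains a term in $\mathcal{M}_\mathcal{C}$ with non-zero coefficient, so the task reduces to pinning down one specific legal matching and extracting its edges.

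First I would replace each edge symbol $x_{uv}$ by a numerical weight of the form $2^{w_e}$, where $w_e$ is an integer weight assigned independently and uniformly at random to each edge $e$ from a range of size $2|E|$. By the Isolating Lemma, with probability at least $1/2$ there is a unique minimum-weight legal perfect matching, where a matching's weight is $\sum_{e\in P} w_e$. The colors, however, must be kept symbolic: I would retain the symbols $y_1,\dots,y_d$ so that the coefficient of each legal monomial in $\mathcal{M}_\mathcal{C}$ is itself a polynomial (in fact a Laurent expression after the substitution) in the powers of $2$ coming from the edge weights. Concretely, after substituting $x_{uv}\mapsto 2^{w_{uv}}$ into $A$, the coefficient of a legal monomial $\prod_i y_i^{2\,\texttt{count}(c,i)}$ in $det(A)$ becomes a signed sum of terms $2^{\text{(weight of a legal PM with coloring }c)}$; when the minimum-weight legal PM is unique, the $2$-adic valuation of the relevant coefficient reveals the weight $W^*$ of that unique matching.

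Next I would recover the isolated matching by testing each edge for membership. For an edge $e=\{(u,c_u^e),(v,c_v^e)\}$, I would form the graph $G-\{u,v\}$ obtained by deleting $u,v$ and their incident edges, compute the analogous determinant with the same weight assignment, and check whether $G-\{u,v\}$ has a unique minimum-weight legal PM (w.r.t.\ the residual symmetric constraint that subtracts the contribution of $e$'s two colors) whose weight equals $W^*-w_e$. Edge $e$ is in the isolated matching precisely when this holds, and all these determinant computations can be carried out in parallel. Since each determinant of an $n\times n$ matrix with $d$ symbols is computable in the time bound $\texttt{DET}(n,d)$ of Theorem~\ref{theo:PIT} (and indeed by Lemma~\ref{lemma:fractionFree} in $O(n^{2d+1})$), the whole procedure runs in randomized polynomial time, and it succeeds whenever isolation succeeds, i.e.\ with probability at least $1/2$.

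The main obstacle I anticipate is the bookkeeping of the symbolic color structure under isolation: unlike the uncolored case, the coefficient we inspect is not a single integer but a polynomial in $y_1,\dots,y_d$, and I must argue carefully that isolation of the \emph{minimum-weight} matching happens \emph{within} the legal class $\mathcal{C}$ rather than over all matchings, so that the extracted $2$-adic valuation genuinely corresponds to a legal PM. This requires restricting attention to the legal monomials of $\mathcal{M}_\mathcal{C}$ throughout and ensuring that the residual constraint used in the edge-deletion step is again symmetric (it is, since deleting two vertices of fixed colors only shifts the target counts). Handling possible cancellations among equal-weight legal matchings with the \emph{same} inherited coloring is exactly what the Isolating Lemma rules out with probability $1/2$, so the correctness hinges on applying it to the weight function restricted to legal perfect matchings.
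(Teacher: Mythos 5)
Your proposal follows essentially the same route as the paper: apply the Isolating Lemma to the family of \emph{legal} perfect matchings with random edge weights from $\{1,\dots,2|E|\}$, substitute $x_{uv}\mapsto\pm 2^{w_e}$ into the adapted Tutte matrix while keeping the color symbols $y_i$, read off the minimum weight $W$ of the isolated legal PM from the $2$-adic valuation of the coefficients of the legal monomials in $\mathcal{M}_\mathcal{C}$, and then recover the matching edge by edge via further determinant computations. The only (harmless) difference is in the recovery step, where you delete both endpoints and re-solve on $G-\{u,v\}$ with shifted color counts rather than using the paper's minor-based parity test $\det(B_{ij})/2^{W-w_e}$ being odd; both are standard instantiations of the Mulmuley--Vazirani--Vazirani technique, and your remark that isolation must be enforced within $\mathcal{C}$ is exactly the point the paper's auxiliary lemma addresses.
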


The intuition of the isolating-lemma-based algorithm is, when we randomly assign integer weights to all edges, there is a decent chance that there will be a unique legal perfect matching with minimum weight, which the search procedure can focus on.

    On planar graphs, the RNC algorithm for EXISTS-PMVC-Sym-Bounded based on symbolic determinant can be derandomized by \textit{Pfaffian orientation} \cite{thomas2006survey}.  A Pfaffian orientation is an assignment of $\{\pm 1\}$ to all symbols $\{x_{ij}\}$ in the Tutte Matrix $T$, such that all terms in $det(T)$ are positive. The high-level idea is, after we evaluate the matrix $A$ on a Pfaffian orientation deterministically,  $G$ has a legal perfect matching iff $\sqrt{det(A)}$ has a legal term in $\mathcal{M}_\mathcal{C}$ with non-zero coefficient.  It is shown in  \cite{kasteleyn1967graph} that planar graphs are Pfaffian orientable.
    
    \begin{proposition} 
    EXISTS-PMVC-Sym-Bounded is in NC$^2$ on planar graphs.
    \label{prop:pfaffian}
    \end{proposition}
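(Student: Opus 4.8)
The plan is to derandomize the symbolic-determinant algorithm of Section \ref{subsection:RPAlgorithm} by removing its only source of randomness. Recall that randomness enters there through the Schwartz--Zippel step (Lemma \ref{prop:constant_testing}), which samples the edge symbols $x_{uv}$ in order to prevent distinct vertex-pairings from cancelling one another inside $det(A)$; the color symbols $y_1,\dots,y_d$ are kept formal throughout, so that a legal term in $\mathcal{M}_\mathcal{C}$ can be read off afterwards. On a planar graph I would instead fix the $x_{uv}$ deterministically by a Pfaffian orientation, whose existence is guaranteed by \cite{kasteleyn1967graph}. First I would invoke the fact that a planar embedding, and from it a Pfaffian orientation, can be produced by an NC algorithm; this assigns a sign $\epsilon_{uv}\in\{\pm 1\}$ to each edge, which we substitute for the corresponding symbol $x_{uv}$.

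The heart of the argument is that, after this substitution, no cancellation survives among matchings sharing a common color profile. The defining property of a Pfaffian orientation is that every perfect matching of the underlying graph receives the same sign in the Pfaffian expansion, so the signed sum degenerates into an honest (nonnegative) count. I would lift this to the bi-colored matrix $A$ by observing that the color factors $\sum_{e\in E_{u,v}} y_{c_u^e}y_{c_v^e}$ are products of formal symbols and therefore contribute no signs: the sign of each colored term is inherited verbatim from the sign of its underlying uncolored pairing. Hence, once the orientation has equalized the pairing signs, the coefficient of each monomial $\prod_i y_i^{2\,\texttt{count}(c,i)}$ is a nonnegative integer that is positive exactly when a legal perfect matching with color profile $c$ exists. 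By Lemma \ref{lemma:PIT} this certifies the presence of a legal PM deterministically, with no appeal to PIT.

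With the randomness eliminated, the complexity bound follows from the parallel cost of the surviving computations. The symbolic determinant in the $d$ color symbols is computable in NC$^2$ by \cite{borodin1983parallel} (for bounded $d$ the polynomial has only polynomially many monomials, so coefficient extraction stays within the bound), the planar Pfaffian orientation is obtainable in NC, and the final scan testing whether some monomial of $\mathcal{M}_\mathcal{C}$ carries a nonzero coefficient reduces to checking the symmetric count-constraint on each monomial in parallel, which is in NC. The dominating step is the determinant, placing the whole procedure in NC$^2$.

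I expect the main obstacle to be the cancellation argument in the colored setting. The classical Pfaffian/Kasteleyn theory controls the signs of uncolored perfect matchings, whereas $det(A)=\operatorname{Pf}(A)^2$ is a degree-$2n$ object in which squaring produces cross terms that mix two \emph{different} matchings, hence two different color profiles; such a cross term could in principle land on a monomial of $\mathcal{M}_\mathcal{C}$ and create a false positive even when no single legal matching realizes that profile. The careful version of the proof therefore works at the level of $\operatorname{Pf}(A)$ (degree $n$, no squaring and no cross terms) rather than $det(A)$, and I would need to confirm that the symbolic Pfaffian, like the determinant, is computable in NC$^2$ and that its monomials of the form $\prod_i y_i^{\texttt{count}(c,i)}$ correspond bijectively to color profiles of genuine legal matchings. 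Making this reconciliation precise — choosing the right polynomial to inspect and verifying that the orientation kills exactly the cancellations that the random $x_{uv}$ were introduced to handle — is where the real work lies.
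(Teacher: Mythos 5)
Your proposal follows the same skeleton as the paper's proof: fix the edge symbols $x_{uv}$ deterministically by a Pfaffian orientation of the planar graph \cite{kasteleyn1967graph}, keep the color symbols $y_1,\dots,y_d$ formal, compute the resulting symbolic determinant in NC$^2$ \cite{borodin1983parallel}, and scan for a monomial of $\mathcal{M}_\mathcal{C}$ with non-zero coefficient. Where you diverge is exactly where it is worth diverging: the cross-term worry in your final paragraph is a genuine issue that the paper's proof does not address. The paper appeals to ``a similar argument to Lemma~\ref{lemma:PIT},'' but the forward direction of that lemma treats an even-cycle permutation as though it encoded a single perfect matching, when it in fact encodes an ordered pair of matchings $P_1,P_2$ whose $y$-monomial is $\prod_i y_i^{\texttt{count}(c^{P_1},i)+\texttt{count}(c^{P_2},i)}$. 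This monomial can lie in $\mathcal{M}_\mathcal{C}$ without either matching being legal: with two colors and the constraint $\texttt{count}(c,\texttt{red})=2$ on four vertices, a graph whose only two perfect matchings have one and three red vertices respectively contributes the legal monomial $y_r^4y_b^4$ via the cross term, and a Pfaffian orientation makes its coefficient strictly positive, so no cancellation can rescue the claim. Your proposed repair --- work with $\mathrm{Pf}(A)$, whose surviving monomials are indexed by single matchings, rather than with $\det(A)=\mathrm{Pf}(A)^2$ --- is the right one, and the symbolic Pfaffian of a skew-symmetric matrix is computable within the same parallel bounds as the determinant, so the NC$^2$ conclusion stands. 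In short: same approach as the paper, but the ``main obstacle'' you identify is a real gap in the published argument (one that also touches Lemma~\ref{lemma:PIT} itself), and your Pfaffian-level version is the correct way to close it.
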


\subsection{A Dynamic Programming Algorithm for Bounded-Treewidth Graphs}
\label{subsec:dp}
 Besides planar graphs, we show that EXISTS-PMVC-Sym-Bounded can also be solved deterministically in polynomial time by dynamic programming (DP) on graphs with bounded treewidth. It is known that DP provides an FPT algorithm for XPM \cite{elmaalouly2022exact}, though the explicit algorithm was not given in previous literature, to our best knowledge. In this paper, we propose a DP algorithm for EXISTS-PMVC-Sym-Bounded as well as its complexity analysis\footnote{In the appendix, we show an attempt of expressing EXISTS-PMVC-Sym by cardinality monadic second-order logic (CMSO).}. 

Treewidth is a graph metric indicating how close a graph is to a tree. Graphs with bounded treewidth naturally emerge in real-life applications, e.g., series/parallel graphs \cite{seriesParallel}, circuit design \cite{wang2001using}, and machine learning \cite{narasimhan2012pac}. Treewidth is of algorithmic interest since many graph problems can be solved in PTIME on bounded treewidth graphs by DP \cite{bodlaender1988dynamic}. 

\begin{definition} {\rm (Tree decomposition and treewidth) }
Let G be a graph. Let $\mathcal{T}$ be a tree where each node of $\mathcal{T}$ is assigned a set of vertices (``bag'') of $G$. We call $\mathcal{T}$ a \emph{tree decomposition} of $G$ if the following holds:
\begin{bracketenumerate}
    \item Each edge in $E$ is contained in a bag; that is, if there is an edge connecting  vertices $u,v\in V$, there exists a node of $\mathcal{T}$ whose bag contains both $u$ and $v$.
    \item The subtree of $\mathcal{T}$ consisting of all bags which contain $u$ is connected, for all  $u\in V$.
\end{bracketenumerate}
The width of a tree decomposition $\mathcal{T}$ is defined as (size of largest bag of $\mathcal{T}$) $-1$. The treewidth of a graph is the minimum width of all its tree decompositions. The treewidth of a tree is $1$.
\end{definition}

\begin{theorem}
\label{theorem:DP}
EXISTS-PMVC-Sym on graphs with treewidth $\texttt{tw}$ can be solved in $O(n^{2d} \cdot (2d+2)^{\tw+1})$, where $d$ is the number of colors.
\end{theorem}

\begin{proof} We first introduce a special type of tree decomposition which is easier to work with. 
\begin{definition} {\rm (Nice tree decomposition)}
A nice tree decomposition is a tree decomposition where every node has one of the following four types:
\begin{bracketenumerate}
     \item A \textbf{leaf node} has no children and its bag has one vertex.
     \item An \textbf{introduce node} has one child. The child has a bag of same vertices as the parent with one vertex deleted.
     \item A \textbf{forget node} has one child. The child has a bag of same vertices as the parent with one vertex added.
     \item A \textbf{join node} has two children, both have a bag identical to the parent's.
\end{bracketenumerate}
\end{definition}

A nice tree decomposition can be obtained from a tree decomposition with polynomial overhead, as Proposition \ref{prop:numBags} states.

\begin{proposition} {\rm \cite{kloks1994treewidth}}
A nice tree decomposition with treewidth $\tw$ can be converted from a tree decomposition with treewidth $\tw$ in $O(\tw^2\cdot n)$ time with $O(\tw \cdot n)$ bags.
\label{prop:numBags}
\end{proposition}

In the following, we define indicator variables for memoization in our DP algorithm. For a node $X$ in the tree decomposition $\mathcal{T}$,  let \texttt{tree}$(X)$ be the set of nodes of the subtree in $\mathcal{T}$ with root $X$. Let $\bag{X}$ be the set of vertices in $G$ contained in $X$. For a set $S$ of nodes in $\mathcal{T}$, let $\bag{S}$ be $\cup_{X\in S}\bag{X}$.

\begin{definition}
{\rm (Partial vertex coloring)} A partial vertex coloring of a vertex set $V'\subseteq V$ is a mapping from $V'$ to $\{0,1,\cdots, d\}$. $c(v)=0$ indicates that $v$ is not matched yet, while $c(v)=i\in \{1,\cdots, d\}$ means $v$ is matched in the graph with inherited color $i$. A partial vertex coloring $c$ of $V'$ is \emph{feasible} if there is a matching in the induced graph of $G$ on $V'$ whose inherited partial vertex coloring is $c$. 
\end{definition}

An indicator Boolean variable in our DP-based algorithm are defined as $I(X,c,b)$, where $X$ is a node in the nice tree decomposition, $c:V\to\{0,1,\cdots,d\}$ is a partial vertex coloring on $\bag{X}$ and $b:\{1,\cdots,d\}\to \mathbb{N}$ is a count of colors.  Then we define $I(X,c,b)$ to be \texttt{true} if the partial vertex coloring $c$ of $\bag{X}$ can be extended to a \emph{feasible} partial coloring $c'$ of all vertices in  $\bag{\texttt{tree}(X)}$, such that each color $i\in \{1,\cdots,d\}$ appears exactly $b(i)$ times in $c'$. Otherwise, $I(X,c,b)$ is defined to be \false. In our DP-based approach, $I(X,c,b)$ can be computed recursively as follows:

\begin{bracketenumerate}
\item \textbf{$X$ is a leaf.} Suppose $\bag{X}=v$. Set $I(X,\{c(v)=0\},0)$ to \true. Set $I(X,c,b)$ to \false\, for all other $c$ and $b$ since the only one node must be unmatched.

\item \textbf{$X$ introduces $v$ from its child $Y$.} 
\begin{itemize}
    \item If $c(v)=0$, i.e., $v$ is not matched in the induced subgraph on $\bag{\texttt{tree}(X)}$, then $I(X,c,b)=I(Y, c', b)$, where $c'(u)=c(u)$ for all $u\in\bag{Y}$. 
    \item If $c(v)\neq 0$, i.e., $v$ is matched with some vertices in $\bag{\texttt{tree}(X)}$. By Proposition \ref{prop:introduceNode}, $v\not\in\bag{\texttt{tree}(Y)}$. Therefore, by the definition of tree decomposition, $\bag{Y}$ must contain all $v$'s neighbours in $\bag{\texttt{tree}(X)}$. Thus $v$ must be matched with a vertex $u\in\bag{Y}$. Therefore we have $I(X,c,b)=\bigvee_{u\in \bag{Y},c(u)\neq 0, \{(u, c(u)), (v,c(v))\} \in E} I(Y,c',b')$, where  $c'(u)=0$ and $c'(w)=c(w)$ for all $w\in \bag{Y}-u$;  $b'(c(u))=b(c(u))-1$, $b'(c(v))=b(c(v))-1$ and $b'(i)=b(i)$ for $i\not\in \{c(v),c(u)\}$.
\end{itemize}

\item \textbf{$X$ forgets $v$ from its child $Y$.} Since $v$ is forgotten in $X$, it must have already been matched in $\bag{\texttt{tree}(Y)}$. Therefore, we have
$
I(X,c,b)=\bigvee_{c'}I(Y,c',b)
$
for all $c'$ s.t. $c'(v)\neq 0$ and $c'(w)=c(w)$ for all $w\in \bag{X}$.

\item \textbf{X joins its children $Y$ and $Z$.} By Proposition \ref{prop:joinNode}, $b$ can be split into two color counts corresponding to $\texttt{tree}(Y)$ and $\texttt{tree}(Z)$. Hence we have
$
I(X,c,b)=\bigvee_{b_Y,b_Z,c_Y,c_Z} \big( I(Y,c_Y,b_Y) \wedge I(Z, c_Z, b_Z) \big)
$
for all 1) $b_Y, b_Z:\{1,\cdots,d\}\to\mathbb{N}$, s.t. $b_Y(i)+b_Z(i)=b(i)+\texttt{count}(c,i)$; 2) $c_Y,c_Z$ such that for all $v\in \bag{X}$,  $c_Y(v)=c(v)$, $c_Z(v)=0$ or $c_Z(v)=c(v)$, $c_Y(v)=0$. 
\end{bracketenumerate}
Finally, graph $G$ has a legal PM w.r.t. symmetric constraints iff. $\bigvee_{b_l,c}I(\mathcal{T}.\texttt{root},c,b_{l})=\true$, for all $c:\bag{\mathcal{T}.\texttt{root}}\to\{1,\cdots,d\}$ and all legal count $b_l$ w.r.t. the symmetric constraints (note that we have $\sum_{i\in\{1,\cdots,d\}}b_l(i)=n$).

The analysis of complexity can be founded in the appendix.
\end{proof}

\begin{corollary}
\label{coro:xPMBoundedTW}
The DP-based algorithm solves XPM on bounded-treewidth graphs  deterministically in $O(n^4\cdot6^
\tw)$.
\end{corollary}

\section{Conclusions and Future Directions}
\label{sec:conlusion}
In this paper, we proposed and studied the problem of perfect matching under vertex-color constraints (EXISTS-PMVC), with motivation in quantum experiment design. We reveal that the complexity of EXISTS-PMVC heavily relies on the representation of the constraints. We first prove that if the constraints are allowed to be expressed in decision diagrams, the problem is NP-hard by a graph-gadget technique. Then we connect EXISTS-PMVC with symmetric vertex-color constraints with a well-known combinatorial problem, exact perfect matching (XPM) by showing those two problems can be polynomially reduced to one another. We give algorithms for EXISTS-PMVC-Sym that natively handle bi-colored graphs. Our results shed light on understanding the complexity of testing quantum states of large quantum optical circuits. 

Our work leaves interesting open questions and future directions. For example, on what special classes of graphs can EXISTS-PMVC be solved efficiently? Are there more interesting types of vertex-color constraints? Motivated by leveraging classical algorithms in the development of quantum computing, we plan to implement scalable algorithms for EXISTS-PMVC and matching partition functions \cite{barvinok2011computing}. Besides PM, we also expect to tackle other quantum-inspired classical problems by combinatorial and logical approaches. 

\bibliography{lipics-v2021-sample-article}

\appendix
\section{Appendix}\label{sec:itemStyles}

\subsection{Technical Proofs}
\subsubsection{Proof of Proposition \ref{prop:polynomiaClIsEasy}}
When $|\mathcal{C}|$ is polynomial, EXISTS-PMVC can be solved by calling the following procedure for each vertex coloring $c\in \mathcal{C}$.

Given a vertex color $c$, let $G_c$ be the subgraph with the original vertex set $V$ but a subset of edges, selected as follows:
$$
E_c=\{c_u^e = c(u),c_v^e=c(v)|\{(u,c_u^e),(v,c_v^e)\}\in E\}
$$
That is, in $G_c$, we only keep the edges whose colors agree with the vertex coloring $c$. Then we run a polynomial algorithm $\mathcal{A}$ for perfect matching on uncolored graphs, say Blossom Algorithm. In the following we prove that the original graph $G$ has a perfect matching whose inherited vertex coloring is $c$ iff. $\mathcal{A}$ returns \true.

$\Rightarrow:$ If there is a PM $P$ whose inherited vertex coloring is $c$, then every edge of $P$ must be in $E_c$. Therefore $G_c$ has an uncolored perfect matching and $\mathcal{A}$ returns \true.

$\Leftarrow:$ Suppose $\mathcal{A}$ returns \true, which means $G_c$ has an uncolored PM $P$. Then the colors of every edge in $P$ agree with $c$, which indicates $G$ has a perfect matching with inherited color $c$. 
 
Therefore, the answer for the decision version of EXISTS-PMVC is \true\, iff. $\mathcal{A}$ returns \true\, for at least one vertex coloring $c\in\mathcal{C}$. 
\hfill\qedsymbol

\subsubsection{Proof of Theorem \ref{theo:EXISTS-PMVC-BDD}}

\begin{proof}
We reduce 3-SAT to EXISTS-PMVC-DD with only red-blue monochromatic edges by the gadget technique, see Figure \ref{fig:3CNF} as an illustration. Suppose $f=C_1\wedge \cdots\wedge C_m$ is an arbitrary 3-CNF formula with variable set $X$, where each $C_i (1\le i\le m$) is a clause with $3$ literals.

\noindent\textbf{Graph construction:} For each clause $C_i=l_{i,1}\vee l_{i,2}\vee l_{i,3}$, we construct a gadget consisting of 6 vertices: 1 vertex $u_i$ representing the clause, 3 vertices $v_{i,1}$, $v_{i,2}$, $v_{i,3}$ representing the three literals in the clause and 2 dummy vertices $w_{i,1}$ and $w_{i,2}$. 

\begin{figure}[h]
	\centering
\includegraphics[scale=0.23]{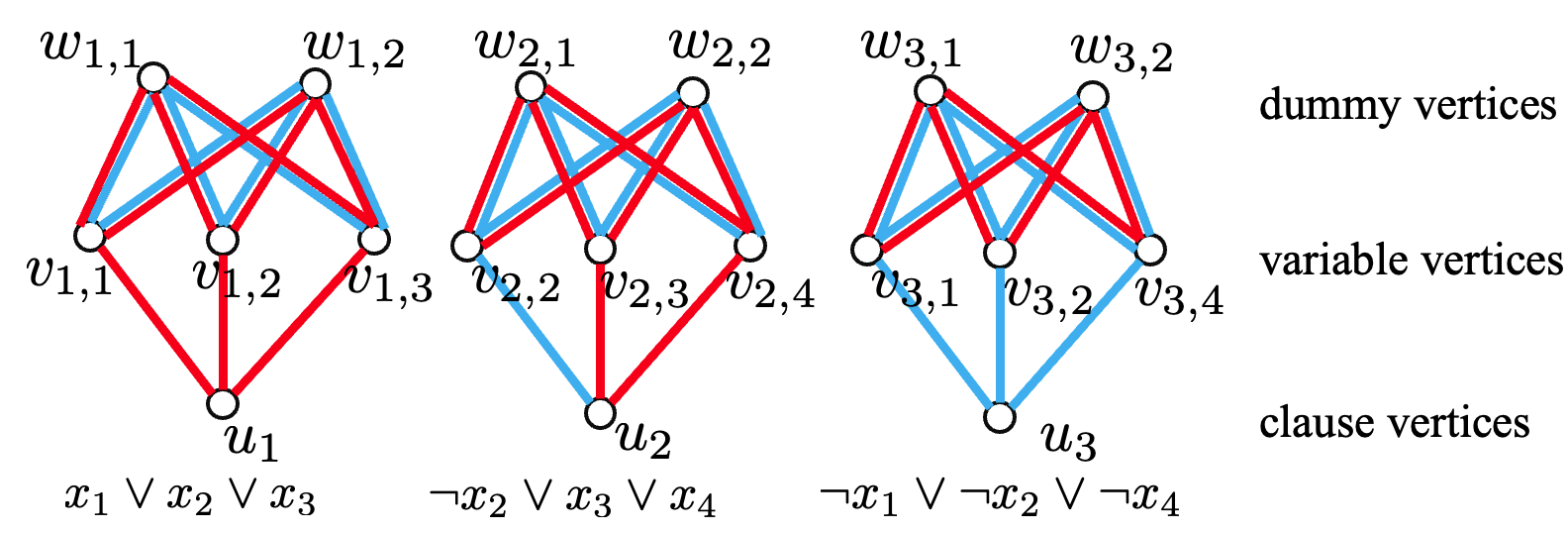}
\caption{An example of reducing a 3-CNF formula to a graph in EXISTS-PMVC-DD.}
\label{fig:3CNF}
\end{figure}

 There is an edge that connects each vertex $v_{i,k}$ ($k=1,2,3$) corresponds to a literal with the vertex $u_i$ corresponding to clause $C_i$. Each edge between $v_{i,k}$ to $u_i$, is colored red if the literal $l_{i,k}$ is positive in $C_i$. Otherwise the edge is colored blue. There is one blue and one red edge connecting each dummy vertex $w_{i,k}$ ($k=1,2$) and each vertex representing the literal $v_{i,k}$ ($k=1,2,3$). The graph $G$ is the union of gadgets corresponding to all clauses. 

\noindent\textbf{DD construction:} For each variable $x\in X$, we denote by $V^x$ the set of all vertices of $G$ that corresponds to $x$. The DD $D$ in our reduction is the conjunction of DDs of the mono-chromatic constraint for each variable $x\in X$, i.e., all vertices in $V_x$ have the same color. Since there are only two colors in our reduction, a binary DD (BDD) is constructed. 

\noindent\textbf{Correctness of the reduction:} we need to prove:
\begin{bracketenumerate}
\item  The graph $G$ has a perfect matching with inherited vertex coloring satisfying the binary DD $D$ iff. the 3-CNF formula $f$ is satisfiable.
\item The reduction is polynomial. The graph $G$ is with polynomial size since it has $6\cdot m$ vertices where $m$ is \# clauses of $f$. We need to show that there exists a binary DD $D$ with polynomial size that represents the conjunction of color constraints. Moreover, $D$ needs to be constructed in polynomial time.
\end{bracketenumerate}

\noindent\textbf{Proof of (1)}:
\noindent$\Leftarrow$: Suppose the 3-CNF formula is satisfiable and $s:X\to \{\true,\false\}$ is a solution. We construct a legal perfect matching $P$ w.r.t. $D$. For each clause $C_i$, there is at least one literal whose assignment in $s$ satisfies the clause. Without loss of generality, suppose $l_{i,1}$ satisfies the clause. Then we add the edge between $v_{i,1}$ to $u_i$ into the matching $P$. By the construction of the graph $G$, the color of this edge is red if $l_{i,1}$ is assigned \true\, in $s$ and blue if $l_{i,1}$ is assigned \false\, in $s$.  For the two vertices corresponding the rest two literals in the clause, i.e., $v_{i,2}$ and $v_{i,3}$, we match them with two dummy vertices $w_{i,1}$ and $w_{i,2}$. For example, we add to $P$ an edge between $v_{i,2}$ and $w_{i,1}$, whose color is red if the variable of $l_{i,2}$ is assigned \true\,  in $s$ and blue if $l_{i,2}$ is assigned \false\, in $s$.  We add an edge to $P$ between $v_{i,3}$ and $w_{i,2}$ similarly. All 6 vertices in the gadget of clause $i$ is perfectly matched. Therefore $P$ is a perfect matching of $G$. We need to prove that our perfect matching $P$ is legal w.r.t. the DD $D$. Due to the construction of $P$, for each variable $x$, the inherited color of vertices in $V^x$ w.r.t. $P$ must be the same, which is determined by the unique assignment of $x$ in the solution $s$. This is exactly what $D$ encodes. Hence $P$ is legal w.r.t. $D$.

\noindent$\mathbf{\Rightarrow}$: Suppose graph $G$ has a legal perfect matching $P$ w.r.t. $D$. Then we construct an assignment $s:X\to\{\true,\false\}$ from $P$ and show that $s$ is a solution of $f$. Since $P$ is legal w.r.t. $D$, the inherited color of all vertices in $V^x$ must be the same for all $x\in X$. For each variable $x$, $s(x)=\true$ if the inherited color of all vertices in $V^x$ w.r.t. $P$ is red and $s(x)=\false$ otherwise.  We show that $s$ is a solution. For each clause $C_i$, since its corresponding gadget is perfectly matched in $P$, the vertex $u_i$ must be connected with a vertex corresponding to one of the three literals, say $l_{i,1}$. This implies that literal $l_{i,1}$ is satisfied by $s$ and so is clause $C_i$. Hence $s$ is a solution of formula $f$ and $f$ is satisfiable. \hfill\qedsymbol

\noindent\textbf{Proof of (2):}
$D$ is the conjunction of  ``the color of all vertices in $V_x$ are equal'' for all $x\in X$, which is an \texttt{All-Equal} constraint. It is well known that as a symmetric Boolean function, an \texttt{All-Equal} constraint admits a polynomial size DD \cite{kyrillidis2021continuous}. We denote the DD of the \texttt{All-Equal} constraint regarding $x$ as $D_x$. In addition, note that for two variables $x_1,x_2\in X$, we have $V_{x_1}\cap V_{x_2}=\emptyset$. A DD representing $D_{x_1} \wedge D_{x_2}$  with size $O(|D_{x_1}|+|D_{x_2}|)$ can be constructed in polynomial time. This is done simply by substituting $D_{x_1}.\true$ by $D_{x_2}$ (note that $D_{x_1}$ and $D_{x_2}$ do not share variables). Thus a DD representing  $\bigwedge_{x\in X}D_x$ with polynomial size can be constructed  in polynomial time and the reduction is polynomial.
\end{proof}

\subsubsection{Proof of Lemma \ref{lemma:PIT}}
We use the following result of Pfaffian of a skew-symmetry matrix:

Let the set of partitions of $(1,n)$ into pairs without regard with order as $\{(i_1,j_1),\cdots, (i_{n/2},j_{n/2})\}$. 
$$
\texttt{Pf}(A) = \sqrt{Det(A)}=\sum_{p\in P}  A_{i_1,j_1} \cdot  A_{i_2,j_2} \cdots  A_{i_{n/2},j_{n/2}}.
$$

$\Leftarrow:$ Suppose $G$ has a perfect matching $P$ whose inherited vertex coloring $c$ is in $S$.  Given perfect matching $P$ with each edge and
(arbitrarily) ordered pair $(u,v)$. Set $x_{uv} = 1$ if $\{(u, c(u)), (v, c(v))\}\in P$ and $u > v$. Otherwise set $x_{uv} = 0$. Let $\pi$ be a permutation such that for each $\{(u, c(u)), (v, c(v))\}\in P$ we have $u = \pi(v)$ and $v = \pi(u)$. Now consider the term $\prod_{i=1} A_{i\pi(i)}$ of $\texttt{Pf}(A)$. It is clearly equal to $\prod_{i}y_i^{c(i)}$, whose coefficient is non-zero. 

\noindent$\Rightarrow:$  Suppose $\texttt{Pf}(A)$ has a non-zero term in $M$, by the expression of Pfaffian, there is a partition of all vertices into pairs such that all $A_{i_1,j_1}$ is non-zero, which yields a PM. Moreover,  each color $i$ appears in the inherit vertex coloring of this matching exactly $c(i)$ times.
\hfill\qedsymbol

\subsubsection{Proof of Theorem \ref{theo:PIT}}
The following is an RP algorithm for EXISTS-PMVC-Sym:
\begin{bracketenumerate}
\item Construct the symbolic matrix $A$ from graph $G$ by Definition \ref{defn:matrixA}.
\item Choose the set size $2|E|$ and conduct PIT on $\texttt{Pf}(A) = \sqrt{det(A)}$.
\item If there exists a term in the polynomial of $det(A)$ in $\mathcal{M}$ with non-zero coefficient, return ``yes'', otherwise return ``no''.
\end{bracketenumerate}
If $G$ has no legal perfect matching, then after PIT in (2), (3) always returns ``no''. If there exists at least one legal PM, then the algorithm above gives ``yes'' with probability at least $1/2$. Therefore by $O(\log\frac{1}{\epsilon})$ independent runs of this algorithm, we get the correct answer to EXISTS-PMVC-Sym with probability at least $1-\epsilon$.
\hfill\qedsymbol

\subsubsection{Proof of Lemma \ref{lemma:fractionFree}}

The fraction-free Gaussian Elimination \cite{bareiss1968sylvester} for computing the determinant of a matrix $a$ computes a sequence of matrix $a^{k}$, where $a^1=a$ and 

$$
a_{ij}^{k+1}=\frac{a_{kk}^{k}a_{ij}^k-a_{ik}^ka_{kj}^k}{a^{k-1}_{k-1,k-1}},\,\,\,\, i,j\in\{k+1,\cdots,n\}\times \{k,\cdots,n\}
$$
and $a_{00}^0$ is defined to be $0$. After all the computations, $a_{nn}^n$ is the determinant of $a$.

It is proven that each term computed by the equation above is fraction-free, i.e., $a^{k-1}_{k-1,k-1}$ is a factor of $a_{kk}^{k}a_{ij}^k-a_{ik}^ka_{kj}^k$. The procedure requires $O(n^3)$ polynomial multiplications and divisions \cite{khovanova2013efficient}. Multiplying or devising two polynomials $p,q$ takes $O(T(p)T(q))$, where $T(p)$ is the number of terms in $p$.

\begin{lemma}
Let $A$ be the matrix defined in Definition \ref{defn:matrixA} and $A_{ij}^k$ be an element computed in fraction-free Gaussian Elimination of $A$. Then $A_{ij}^k$ is either $0$ or homogeneous of degree $2k$. 
\end{lemma}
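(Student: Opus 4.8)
The plan is to prove the statement by induction on $k$, where throughout ``degree'' means degree in the color symbols $y_1,\dots,y_d$, with the edge symbols $x_{uv}$ and the integer scalars treated merely as coefficients; homogeneity in the $y$-variables is the only graded structure preserved by the recursion and it is all the downstream degree bound in Lemma \ref{lemma:fractionFree} needs. First I would record the base case $k=1$: by Definition \ref{defn:matrixA} every off-diagonal entry $A_{uv}=A^1_{uv}$ is either $0$ or $\pm\, x_{uv}\sum_{e\in E_{u,v}}y_{c_u^e}y_{c_v^e}$, in which each monomial $y_{c_u^e}y_{c_v^e}$ has $y$-degree exactly $2$, while the diagonal entries are $0$. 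Hence every $A^1_{ij}$ is either $0$ or homogeneous of degree $2=2\cdot 1$. I would also fix the convention $A^0_{00}=1$, which is homogeneous of degree $0=2\cdot 0$; this extends the induction base to $k=0$ and is the value the algorithm uses in its very first division (I read the ``$0$'' in the statement of fraction-free elimination as a typo for $1$).

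For the inductive step, assume the claim for all stages up to $k$, and analyze the recursion $A^{k+1}_{ij}=\bigl(A^k_{kk}A^k_{ij}-A^k_{ik}A^k_{kj}\bigr)/A^{k-1}_{k-1,k-1}$ factor by factor. By the inductive hypothesis each of $A^k_{kk},A^k_{ij},A^k_{ik},A^k_{kj}$ is either $0$ or homogeneous of degree $2k$; since the polynomial ring is an integral domain, each product $A^k_{kk}A^k_{ij}$ and $A^k_{ik}A^k_{kj}$ is either $0$ or homogeneous of degree $4k$, so their difference, the numerator $N$, is either $0$ or homogeneous of degree $4k$. If $N=0$ then $A^{k+1}_{ij}=0$ and we are done. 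Otherwise the denominator $A^{k-1}_{k-1,k-1}$ is homogeneous of degree $2k-2$ by the inductive hypothesis, and the fraction-free property already established in the proof of Lemma \ref{lemma:fractionFree} guarantees that it divides $N$ exactly, so $A^{k+1}_{ij}$ is a genuine polynomial. It then remains to conclude that this exact quotient is homogeneous of degree $4k-(2k-2)=2k+2=2(k+1)$.

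The key auxiliary fact to isolate, and the step I expect to be the main obstacle, is that an exact quotient of homogeneous polynomials is again homogeneous with the degrees subtracting: if $N=Q\cdot R$ with $N$ homogeneous of degree $a$ and $Q$ nonzero homogeneous of degree $b$, then $R$ is homogeneous of degree $a-b$. I would prove this by decomposing $R=\sum_{d}R_d$ into its $y$-homogeneous components; because $Q$ is homogeneous and the ring is an integral domain, $Q R_d$ is nonzero homogeneous of degree $b+d$ whenever $R_d\neq 0$, and for distinct $d$ these live in distinct graded pieces, so the homogeneity of $N=\sum_{d}Q R_d$ forces $R_d=0$ for every $d\neq a-b$. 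Applying this with $N$ the numerator (degree $4k$) and $Q=A^{k-1}_{k-1,k-1}$ (degree $2k-2$) yields that $A^{k+1}_{ij}$ is homogeneous of degree $2(k+1)$, completing the induction.

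As a sanity check and an alternative route, I would note the Bareiss subdeterminant identity, under which $A^k_{ij}$ equals the $k\times k$ minor of $A$ on rows $\{1,\dots,k-1,i\}$ and columns $\{1,\dots,k-1,j\}$. Expanding this determinant as a signed sum of products of exactly $k$ entries of $A$, each of $y$-degree $2$, shows directly that every surviving term has $y$-degree $2k$, and it simultaneously re-derives exactness of the divisions; however, this route requires importing the Bareiss identity, whereas the inductive argument above stays self-contained within the recursion already stated in the proof of Lemma \ref{lemma:fractionFree}.
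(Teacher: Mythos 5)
Your proof follows essentially the same route as the paper's: induction on $k$, with the numerator $A^k_{kk}A^k_{ij}-A^k_{ik}A^k_{kj}$ homogeneous of degree $4k$ (or zero) and the divisor $A^{k-1}_{k-1,k-1}$ of degree $2k-2$, so the exact quotient has degree $2(k+1)$; your explicit graded-quotient lemma and the clarification that ``degree'' means degree in the $y$-symbols are refinements the paper leaves implicit. The one point the paper makes that you gloss over is the non-vanishing of the divisor $A^{k-1}_{k-1,k-1}$ (needed before you may speak of its degree and divide by it), which the paper justifies via the Bareiss/Turner identity that $A^k_{kk}$ is a leading principal minor of $A$ --- the same identity you mention only as an optional alternative route.
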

\begin{proof}
We prove by induction.

Basis: when $k=1$ the statement holds since every term in $A$ is either $0$ or a homogeneous polynomial with degree $2$. 

Inductive step: suppose each $A_{ij}^k$ ($k\ge 1$) is either $0$ or a homogeneous polynomial of degree $2k$. Then 
$$
A_{ij}^{k+1}=\frac{A_{kk}^{k}A_{ij}^k-A_{ik}^kA_{kj}^k}{A^{k-1}_{k-1,k-1}}
$$
By inductive hypothesis, $A_{kk}^{k}A_{ij}^k-A_{ik}^kA_{kj}^k$ is either $0$ or homogeneous with degree $2k+ 2k=4k$. It was proved in \cite{turner1995gauss} that $A^k_{kk}$ is the determinant of the primal minor of $A$ of dimension $k$ for all $k$.  Therefore $A^{k-1}_{k-1,k-1}$ is non-zero and must be homogeneous of degree $2k-2$, by inductive hypothesis. Since $A_{k-1}A_{k-1}^{k-1}$ divides $A_{kk}^{k}A_{ij}^k-A_{ik}^kA_{kj}^k$, we have $A_{ij}^{k+1}$ is either $0$ or homogeneous of degree $4k-(2k-2)=2(k+1)$, which completes the proof.
\end{proof}

Back to the proof of Lemma \ref{lemma:fractionFree}, all polynomials during the fraction-free Gaussian elimination are either $0$ or homogeneous polynomials of degree $2n$. Therefore, the cost of multiplying or devising two polynomials is bounded by $O(n^{2d-2})$. Hence the complexity of symbolic determinant of $A$ by fraction-free Gaussian Elimination is bounded by $O(n^{2d-2}\cdot n^3)=O(n^{2d+1})$.
\hfill\qedsymbol

\subsubsection{Proof of Theorem \ref{theorem:findingPM}
(Obtaining a Perfect Matching)}

Some proofs in this section are adapted from \cite{lectureNotes} for handling EXISTS-PMVC-Sym.
\begin{definition} {\rm (Set system)}
A set system $(S, F)$ consists of a
finite set S of elements,
$s = \{x_1, x_2,\cdots, x_n\}$, and a family $F$ of subsets of $S$, i.e. $F = {S_1,S_2,\cdots,S_k}$, $S_j\subseteq S$,
for $1 \le j \le k$.
\end{definition}

Let us assign a weight $w_i$ to each element $X_i\in S$ and let us define the weight of the set $S_j$ to be $\sum_{x_i\in S_j} w_i$. Then the isolation lemma is as follows:

\begin{lemma}
{\rm (Isolating Lemma) \cite{valiant1985np}}  Let $(S, F)$ be a set system whose
elements are assigned integer weights chosen
uniformly and independently from $\{1,2,\cdots, K\}$,
Then, 
$$\mathbb{P} [\text{There is a unique minimum weight set in $F$}] \ge \frac{n}{K}$$
\end{lemma}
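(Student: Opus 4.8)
The plan is to prove the isolating lemma through a union bound over \emph{ambiguous} elements, which naturally yields $\mathbb{P}[\text{unique minimum-weight set}] \ge 1 - n/K$. First I would introduce the central definition: call an element $x_i \in S$ \emph{ambiguous} (with respect to a fixed weighting) if some minimum-weight set of $F$ contains $x_i$ while some other minimum-weight set does not. The pivotal structural observation is that the minimum-weight set is unique exactly when no element is ambiguous. Indeed, if two distinct sets $A \ne B$ both attained the minimum weight, then any element lying in their symmetric difference would sit in one minimum-weight set but not the other, and hence witness ambiguity. Consequently $\mathbb{P}[\text{min.\ set not unique}] \le \mathbb{P}[\exists\, \text{ambiguous element}] \le \sum_{i=1}^n \mathbb{P}[x_i \text{ ambiguous}]$.

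Next I would bound $\mathbb{P}[x_i \text{ ambiguous}]$ for each fixed $i$ using the principle of deferred decisions. I freeze all weights $w_j$ for $j \ne i$ (adversarially, in the worst case) and expose only the randomness of $w_i$. With the other weights frozen, let $\alpha$ be the minimum weight among all sets in $F$ that \emph{exclude} $x_i$, and let $\beta$ be the minimum, over all sets in $F$ that \emph{include} $x_i$, of the weight contributed by elements other than $x_i$. Then every set containing $x_i$ has weight at least $\beta + w_i$, so the global minimum weight equals $\min(\alpha,\, \beta + w_i)$. The crucial point is that $x_i$ is ambiguous if and only if both families attain this global minimum simultaneously, i.e.\ if and only if $w_i = \alpha - \beta$.

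Since $\alpha$ and $\beta$ depend only on the frozen weights $\{w_j\}_{j \ne i}$, they are independent of $w_i$, so $\alpha - \beta$ is a fixed integer and $\mathbb{P}[w_i = \alpha - \beta] \le 1/K$ because $w_i$ is drawn uniformly from the $K$ values $\{1,\dots,K\}$. Summing over the $n$ elements gives $\mathbb{P}[\exists\, \text{ambiguous element}] \le n/K$, whence $\mathbb{P}[\text{unique minimum-weight set}] \ge 1 - n/K$. The main obstacle, and the heart of the argument, is isolating the single-element randomness correctly: one must verify that the ambiguity of $x_i$ collapses to the single threshold identity $w_i = \alpha - \beta$, so that the independence of $w_i$ from $(\alpha,\beta)$ can legitimately be invoked. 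Once this reduction is in place, the union bound is routine.
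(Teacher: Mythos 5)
Your proof is correct; it is the standard Mulmuley--Vazirani--Vazirani argument via ``ambiguous'' elements and deferred decisions, and the reduction of ambiguity of $x_i$ to the single threshold event $w_i = \alpha - \beta$ (with $\alpha,\beta$ determined by the frozen weights, hence independent of $w_i$) is exactly the right way to isolate the randomness. Note that the paper does not prove this lemma at all --- it is imported as a black box from the literature --- so there is no internal proof to compare against. One discrepancy worth flagging: the bound as printed in the paper, $\mathbb{P}[\text{unique minimum-weight set}] \ge \frac{n}{K}$, is a typo (it is vacuous or false for $K \le n$); the correct statement, which your argument establishes, is $\mathbb{P}[\text{unique minimum-weight set}] \ge 1 - \frac{n}{K}$, and that is also the form actually needed later in the paper, where $K = 2|E|$ is chosen to make the failure probability at most $1/2$. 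A minor point of care in your threshold step: when no set in $F$ contains $x_i$ (or none excludes it), $\beta$ (respectively $\alpha$) is $+\infty$ and $x_i$ is trivially unambiguous, which is consistent with your bound since the event $w_i = \alpha - \beta$ is then empty; it is worth saying this explicitly so the case analysis is airtight.
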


In our case, the set system consists of $|E|$ edges as elements, while all perfect matchings that satisfy the symmetric constraints form the system. 
We assign random integer weights from $\{1,\cdots, 2\cdot|E|\}$ to all edges. Suppose the weight assigned to edge $(i,j)$ is $w_{ij}$. We have the following modification to the matrix $A$ to get matrix $B$:

\begin{equation}\nonumber
B_{uv} = 
    \begin{cases}
           \sum\limits_{e\in E_{u,v}} y_{c_u^e}\cdot y_{c_v^e}\cdot 2^{w_{e}}, &\text{if  $u>v$ }\\
        \sum\limits_{e\in E_{u,v}} y_{c_u^e}\cdot y_{c_v^e} \cdot ( - 2^{w_{e}}), &\text{if  $u<v$ }\\
          0,  &\text{otherwise}
    \end{cases}
\end{equation}

Let the weight of each perfect matching $P$ be $\sum_{e\in P}w_e$.

\begin{lemma} If graph $G$ has a legal perfect matching, then with probability at least $\frac{1}{2}$  there exists a legal term in $\texttt{Pf}(B) = \sqrt{det(B)}$ with non-zero coefficient $p$. The highest power of $2$ that divides $p$ is $2^{2W}$, where $W$ is the weight of the legal perfect matching with minimum weight.
\end{lemma}
\begin{proof}
Recall that terms corresponding to permutations with odd cycles cancel out. For the permutation $\sigma$ corresponding to the unique legal minimum-weight perfect matching in $G$, we have $\prod_{i=1}B_{i,\sigma(i)}=\pm 2^{2W}$ . We will show that for all other permutations $\tau$, we have $\prod_{i=1} T_{i,\tau(i)} = \pm 2^{2W'}$ for some integer $a > W$, which completes the proof.
Let $\tau$ be a permutation not corresponding to the unique legal minimum-weight perfect matching in $G$. If $\tau$ corresponds to a legal matching of weight $W'$, then $\prod_{i=1} B_{i,\tau(i)} = \pm2^{2W'}$. Since this matching is
not the one of minimum-weight, we have $W > W'$ as required.
Suppose instead that $\tau$ is a permutation consisting solely of even cycles, but not necessarily corresponding to a matching in $G$. Suppose that $\prod_{i=1} B_{i,\tau(i)} = \pm2^{2W'}$ with $W'\le W$. Then we show that a legal perfect matching with weight $W'$ can be constructed from $\tau$, which conflicts with the assumption that there is a unique legal perfect matching with weight $W$. For each even circle $\mathcal{C}$, one can construct two matchings that cover all vertices in $\mathcal{C}$. We choose the one with lower weight. The matchings from all even circles forms a legal perfect matching of $G$, with weight $\le W$.
\end{proof}

After obtaining the weight $W$ of the legal PM $P$ with minimum weight, for each edge $e$, whether it is in the PM $P$ can be determined by the following lemma.

\begin{lemma} {\rm \cite{mulmuley1987matching}} For each edge $e\in E$, $e$ belongs to the legal perfect matching iff. $\frac{det(B_{ij})}{2^{W-w_e}}$ is odd.
\end{lemma}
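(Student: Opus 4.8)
The plan is to import the cofactor technique of Mulmuley, Vazirani and Vazirani and adapt it to our colored skew-symmetric matrix $B$, reusing the $2$-adic isolation already in place. Let $M^{*}$ be the unique minimum-weight legal perfect matching, which exists with probability at least $1/2$ by the isolating lemma and the preceding lemma, and put $W=w(M^{*})$. Fix an edge $e$ joining vertices $i$ and $j$, and read $B_{ij}$ as the matrix obtained from $B$ by deleting the rows and columns indexed by $i$ and $j$; the permutation terms of $\det(B_{ij})$ are then in bijection with the permutation terms of $\det(B)$ in which $i$ and $j$ are matched to one another through $e$.

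First I would pin down the combinatorial content of this minor. Restricting to the legal color monomial $m^{*}\in\mathcal{M}_{\mathcal{C}}$ determined by $M^{*}$, and repeating verbatim the pairing argument from the proof of Lemma \ref{lemma:PIT} (permutations carrying an odd cycle cancel, and an even-cycle permutation that is not itself a matching can be turned into a no-heavier legal matching), the coefficient of $m^{*}$ in $\det(B_{ij})$ is a signed sum $\sum_{M\ni e}\pm 2^{2(w(M)-w_{e})}$ taken over the legal perfect matchings $M$ of $G$ that use $e$, where the shift by $w_{e}$ records the edge whose two endpoints were deleted. Hence the $2$-adic valuation of this coefficient equals $2\big(\min\{\,w(M):M\text{ legal},\,e\in M\,\}-w_{e}\big)$.

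Next I would run the isolation step on this restricted sum. If $e\in M^{*}$, the minimum weight among legal matchings through $e$ is attained \emph{only} by $M^{*}$ and equals $W$, so every competing term is divisible by a strictly higher power of $2$ and the quotient by $2^{2(W-w_{e})}$ is odd; if $e\notin M^{*}$, uniqueness of $M^{*}$ forces every legal matching through $e$ to have weight strictly above $W$, so the coefficient is divisible by $2^{2(W-w_{e})+2}$ and the quotient is even. This is exactly the claimed equivalence, the stated exponent $2^{\,W-w_{e}}$ being the same statement read on the Pfaffian (square-root) scale on which Mulmuley et al.\ phrase it for their matrix.

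The step I expect to be the genuine obstacle is this bookkeeping rather than the high-level idea: one must check simultaneously that (a) only the odd-cycle permutations cancel, leaving a clean sum over products of even cycles; (b) the spurious even-cycle permutations that are not matchings never reach $2$-adic valuation below $2(W-w_{e})$, which follows from the same weight-non-increasing matching-extraction used in the preceding lemma; and (c) the monomial $m^{*}$ is correctly isolated so that only legal matchings contribute to the coefficient being tested. Once these three points are verified, tracking whether the normalization carries $2^{w(M)}$ or $2^{2w(M)}$ is a routine constant-chasing that reconciles our exponent with the cited formula.
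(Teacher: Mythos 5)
The paper never proves this lemma---it is imported verbatim from \cite{mulmuley1987matching}---so your cofactor argument is a genuine supplement rather than a rival proof, and it is the right argument: expand the determinant of the minor, cancel odd-cycle permutations by the pairing of Lemma \ref{lemma:PIT}, reduce spurious even-cycle terms to legal matchings of no greater weight, and decide membership of $e$ in $M^{*}$ from the $2$-adic valuation, using uniqueness of the minimum-weight legal matching in both directions. The overall structure is sound.

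Two of the points you defer to ``routine constant-chasing'' are, however, exactly where the statement as printed goes wrong, and you should not wave them off. First, the exponent: with the skew-symmetric $B$ of this paper every legal perfect matching $M$ contributes $\pm 2^{2w(M)}$ (not $2^{w(M)}$) to the relevant coefficient---the immediately preceding lemma itself asserts that the valuation of $\det(B)$ is $2^{2W}$---so the correct test is whether $\det(B_{ij})\cdot 2^{2w_e}\big/2^{2W}$ is odd. The printed threshold $2^{W-w_e}$ is the bipartite/Pfaffian normalization of \cite{mulmuley1987matching}; applied to the determinant it gives an even quotient whenever $e\in M^{*}$ and $W>w_e$. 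Your derivation of $2^{2(W-w_e)}$ is the correct one, and your ``square-root scale'' remark is in fact diagnosing an error in the statement rather than reconciling two equivalent readings---say so explicitly and fix the exponent. Second, since $\det(B)$ is a polynomial in the $y_i$, you must name the monomial whose coefficient is tested in the minor (it is $m^{*}$ divided by $(y_{c_i^e}y_{c_j^e})^{2}$, and legality of an extension $M'\cup\{e\}$ depends on the colors of $e$); moreover the graphs here are multigraphs, so $\det(B_{ij})$ is shared by all edges parallel to $e$ and the per-edge claim only becomes meaningful once the factor $2^{2w_e}$ and the $y$-monomial of the specific edge $e$ are reinstated. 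Neither point defeats your argument, but both must be pinned down before the lemma is a correct statement about an individual edge.
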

Therefore, a legal perfect matching can be obtained by a randomized algorithm in polynomial time.\hfill\qedsymbol

   \subsubsection{Proof of Proposition \ref{prop:pfaffian}}
    Since the graph $G$ is planar, we can evaluate the matrix $A$ in Definition \ref{defn:matrixA} at the Pfaffian Orientation of $G$ \cite{kasteleyn1967graph}. The Pfaffian Orientation evaluates every term in $det(A)$ to be positive. By similar argument in the proof of Lemma \ref{lemma:PIT}, graph $G$ has a legal PM iff. the evaluation of $det(A)$ at the Pfaffian Orientation has a term in $\mathcal{M}_\mathcal{C}$. Since the evaluation of symbolic determinant is in NC$^2$ \cite{borodin1983parallel},  EXISTS-PMVC-Sym on planar graphs is also in NC$^2$.
\hfill\qedsymbol

\subsubsection{Auxiliary Propositions for Proving Theorem \ref{theorem:DP}}
\begin{proposition}
On a node $X$ that introduces $v$ with child $Y$, $v\not\in\texttt{bag}(\texttt{tree}(Y))$.
\label{prop:introduceNode}
\end{proposition}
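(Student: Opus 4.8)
The plan is to derive the claim directly from the connectivity axiom of tree decompositions, namely property (2) in the definition of tree decomposition, which guarantees that for every vertex $u\in V$ the set of nodes of $\mathcal{T}$ whose bags contain $u$ induces a \emph{connected} subtree of $\mathcal{T}$. Beyond this, the only structural facts I would invoke are the defining conditions of an introduce node: since $X$ introduces $v$ with child $Y$, we have $\bag{X}=\bag{Y}\cup\{v\}$ with $v\in\bag{X}$ but $v\notin\bag{Y}$.

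First I would argue by contradiction. Suppose $v\in\bag{\texttt{tree}(Y)}$, so that some node $Z$ in the subtree rooted at $Y$ satisfies $v\in\bag{Z}$. I would then consider the unique path in $\mathcal{T}$ joining $X$ and $Z$. Because $\mathcal{T}$ is a tree and the entire subtree $\texttt{tree}(Y)$ hangs off $X$ through its child $Y$, deleting $Y$ separates $Z$ from $X$; hence $Y$ lies strictly on this $X$-to-$Z$ path (this remains correct even in the degenerate case $Z=Y$, since $v\notin\bag{Y}$ rules it out immediately).

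Next I would combine the two observations: the bags of both $X$ and $Z$ contain $v$, while the bag of the intermediate node $Y$ does not, as $\bag{Y}=\bag{X}\setminus\{v\}$. Consequently the set of nodes whose bags contain $v$ fails to be connected, directly contradicting property (2). This contradiction forces $v\notin\bag{\texttt{tree}(Y)}$, which is the assertion.

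I do not expect a genuine obstacle here, as the statement is a routine consequence of the axioms; the only point meriting care is the bookkeeping that $Y$ truly lies on every $X$-to-$Z$ path, which I would justify through the tree-separation remark above rather than leaving it implicit.
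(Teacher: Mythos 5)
Your proof is correct and follows essentially the same route as the paper's: assume $v$ appears in some bag $Z$ of the subtree rooted at $Y$, note that $Y$ separates $Z$ from $X$ while $v\in\bag{X}$ but $v\notin\bag{Y}$, and conclude that the connectivity condition of tree decompositions is violated. Your write-up is in fact slightly more careful than the paper's (which contains a typo in the opening supposition), but the argument is the same.
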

\begin{proof}
Suppose $v\in \texttt{bag(\texttt{Y})}$. Since $X$ introduces $v$, $v\not \in \texttt{bag}(Y)$. Therefore $v$ must appear in the bag of a node $Z$, which is separated from $X$ by $Y$. This contradicts with the second condition in the definition of tree decomposition.
\end{proof}

\begin{proposition}
On a join node $X$ with children $Y$, $Z$,  $\texttt{bag}(X)=\texttt{bag}(\texttt{Tree}(Y))\cap \texttt{bag}(\texttt{Tree}(Z))$.
\label{prop:joinNode}
\end{proposition}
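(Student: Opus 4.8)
The plan is to prove the two inclusions separately: the forward one is immediate from the definition of a join node, and the reverse one relies on the connectivity axiom of tree decompositions together with the rooted-tree topology.

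First I would establish $\bag{X} \subseteq \bag{\texttt{tree}(Y)} \cap \bag{\texttt{tree}(Z)}$. By the definition of a join node, $\bag{X}=\bag{Y}=\bag{Z}$. Since $Y$ itself belongs to $\texttt{tree}(Y)$, we have $\bag{Y}\subseteq \bag{\texttt{tree}(Y)}$, and likewise $\bag{Z}\subseteq \bag{\texttt{tree}(Z)}$. Hence $\bag{X}$ is contained in both $\bag{\texttt{tree}(Y)}$ and $\bag{\texttt{tree}(Z)}$, and therefore in their intersection. This direction requires no topological reasoning.

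The substantive direction is $\bag{\texttt{tree}(Y)} \cap \bag{\texttt{tree}(Z)} \subseteq \bag{X}$. Take any vertex $v$ in the intersection, so $v$ appears in the bag of some node $A\in\texttt{tree}(Y)$ and of some node $B\in\texttt{tree}(Z)$. The key observation is that, because $Y$ and $Z$ are distinct children of $X$, the subtrees $\texttt{tree}(Y)$ and $\texttt{tree}(Z)$ are disjoint in $\mathcal{T}$, and the unique path in $\mathcal{T}$ from $A$ to $B$ must leave the $Y$-subtree through $Y$, pass through the common parent $X$, and enter the $Z$-subtree through $Z$; in particular $X$ lies on this path. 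Now I would invoke the second axiom of tree decompositions: the set of nodes whose bag contains $v$ induces a connected subtree of $\mathcal{T}$. Since $A$ and $B$ both contain $v$ and the path from $A$ to $B$ runs through $X$, connectivity forces every node on that path, and in particular $X$, to contain $v$. Therefore $v\in\bag{X}$, completing the reverse inclusion.

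The main obstacle is making the topological step rigorous, namely justifying that every path between a node of $\texttt{tree}(Y)$ and a node of $\texttt{tree}(Z)$ passes through $X$. This is a standard consequence of the rooted-tree structure (the subtrees of distinct siblings are separated by their parent), but it should be stated explicitly, for instance by noting that deleting $X$ from $\mathcal{T}$ disconnects $\texttt{tree}(Y)$ from $\texttt{tree}(Z)$. Once this separation is in hand, the connectivity axiom does the rest, and combining the two inclusions yields the claimed equality. I would remark that the very same argument underlies Proposition~\ref{prop:introduceNode}, so the two auxiliary facts can share their core reasoning.
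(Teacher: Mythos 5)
Your proof is correct and follows essentially the same route as the paper's: the trivial inclusion comes from $\bag{X}=\bag{Y}=\bag{Z}$, and the substantive inclusion uses the fact that $X$ separates $\texttt{tree}(Y)$ from $\texttt{tree}(Z)$ together with the connectivity axiom of tree decompositions (the paper phrases this step as a contradiction, you phrase it directly, but the reasoning is identical).
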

\begin{proof}
We first prove that for each vertex $v$ in both $\texttt{bag}(\texttt{Tree}(Y))$ and $\texttt{bag}(\texttt{Tree}(Z))$, $v\in \texttt{bag}(X)$. Suppose there exists a vertex $v\in \texttt{bag}(\texttt{Tree}(Y))\cap \texttt{bag}(\texttt{Tree}(Z))$ but $v\not\in \texttt{bag}(X)=\texttt{bag}(Y)=\texttt{bag}(Z)$. Then there exist a node $Y'\in \texttt{Tree}(Y)$ and a node $Z'\in \texttt{Tree}(Y)$ such that $v\in \texttt{bag}(Y')$ and $v\in \texttt{bag}(Z')$. However, $Y'$ and $Z'$ are separated by $X$, which violates the second condition in the definition of nice tree decomposition.

Next we show that if a vertex $v$ is in $\texttt{bag}(X)$, then $v$ must be in $\texttt{bag}(\texttt{Tree}(Y))\cap \texttt{bag}(\texttt{Tree}(Z))$. Since $X$ is a join node, we have $\texttt{bag}(X)=\texttt{bag}(Y)=\texttt{bag}(Z)$ and the statement holds trivially.
\end{proof}

\subsubsection{Complexity Analysis of Theorem \ref{theorem:DP}}
The time complexity of this DP-based procedure can be bounded by the product of 1) the number of all possible states of $I$, denoted by $|I|$ and 2) the maximum number of states that one state relies on.

 The number of states is given by the product of the following three parts:
\begin{alphaenumerate}
    \item number of bags (nodes in the tree decomposition): $O(\tw\cdot n)$ by Proposition \ref{prop:numBags}.
    \item number of possible partial colorings $c$: $(d+1)^{\tw+1}$.
    \item number of possible count of colors $b$: $\binom{n}{d-1}\sim O(n^{d-1})$.
\end{alphaenumerate}
Therefore the number of all possible states $(X,c,b)$ is $O(\tw\cdot n^{d}\cdot (d+1)^{\tw+1})$.

The maximum number of states considered for computing one state is bounded by the product of 1) the number of possible ways of splitting a count of all colors $b$ at a join node $X$ into $b_Y$ and $b_Z$, bounded by  $\prod_{i=1}^d(b(i)+1) \le ((\ceil{\frac{n}{d}}+1)^d)\le  (\frac{n+2d}{d})^d=O(n^{d})$; and 2) the number of possible ways of splitting the coloring to $\bag{X}$ at a join node $X$ into $c_Y$ and $c_Z$, which is bounded by $2^{tw+1}$. Hence the overall complexity of the DP approach is $O(n^{2d}\cdot (2d+2)^{\tw+1})$.

\subsection{CMSO representation of EXISTS-PMVC-Sym}
Not all graph properties on bounded treewidth graphs can be solved efficiently. The seminal meta-theorem of Courcelle reveals the deep connection between monadic second-order logic (MSO) and tractable graph properties on bounded treewidth graphs. We show an attempt of expressing EXISTS-PMVC-Sym by MSO formulas. Nevertheless, the length of the MSO formula is polynomial w.r.t. the size of the graph instead of constant, which means Courcelle's Theorem can not be directly applied to get a linear-time DP-based algorithm for EXISTS-PMVC-Sym-Bounded.
\begin{definition}
{\rm (Extended monadic second-order logic (MSO) on graphs) \cite{courcelle1990monadic}}  MSO is a fragment of second-order logic where the second-order quantification is limited to quantification over sets. The extended MSO allows a graph property to be expressed by $\{V,E,\mathbf{edg},\mathbf{Card},P_1,\cdots,P_k\},$ where $V$ is the set of vertices; $E$ is the set of (directed) edges; $\mathbf{edg}_G(e,u,v)=\true$ iff. there is an edge from $u$ to $v$ in $G$; $\mathbf{Card}_{p,n}(X)\Leftrightarrow |X|\equiv p \mod n$, $X$ is a subset of $V$ or $E$; $P_1,\cdots,P_k$ are subsets of $V$ or $E$, which can be viewed as vertices and edge labels/colors.  

\end{definition}

\begin{theorem}
{\rm (Courcelle's Theorem) \cite{courcelle1990monadic}} Let $f$ be a formula in extended MSO. Then there is an algorithm which, on every graphs with $n$ vertices and treewidth $\tw$, decides whether $f$ holds in $G$ in $O(F(|f|,\tw)\cdot n)$, for some computable function $F$.
\end{theorem}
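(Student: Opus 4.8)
The plan is to reduce model-checking of the formula $f$ on $G$ to the acceptance of a suitably labeled tree by a finite bottom-up tree automaton, where the automaton is computed solely from $f$ and the width bound $\tw$, and then to run that automaton in a single linear-time sweep over a tree decomposition. The computable function $F(|f|,\tw)$ will be exactly the size of the resulting automaton together with the alphabet it runs on.

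First I would fix a tree decomposition $\mathcal{T}$ of $G$ of width $\tw$ and normalize it into a \emph{nice} tree decomposition (with leaf, introduce-vertex, introduce-edge, forget, and join nodes) having $O(\tw\cdot n)$ nodes; a decomposition of width $\tw$ can itself be produced by Bodlaender's linear-time algorithm, whose running time is linear in $n$ with a constant depending only on $\tw$, so it is safely absorbed into $F$. Each node carries a bag of at most $\tw+1$ vertices, which lets me encode the entire annotated decomposition as a labeled binary tree $t$ over a finite alphabet $\Sigma_{\tw}$ whose symbols record the bag's isomorphism type, the introduced or forgotten vertex/edge, and the adjacency and color pattern inside the bag. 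The essential point is that $|\Sigma_{\tw}|$ depends only on $\tw$, not on $n$.

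Next I would translate the graph formula $f$ into an MSO formula $\hat{f}$ over $t$. Vertex and edge sets of $G$ become set variables ranging over positions of $t$ — each vertex or edge is ``owned'' by a unique forget or introduce-edge node, giving a bijection with a subset of tree positions — the predicate $\mathbf{edg}$ becomes a local condition read off the bag labels, and the counting predicate $\mathbf{Card}_{p,n}$ is preserved as a modular-counting predicate on tree positions. I would then invoke the Doner--Thatcher--Wright correspondence (the tree analogue of the B\"uchi--Elgot--Trakhtenbrot theorem), in its counting-MSO extension: every counting-MSO formula over finite labeled trees defines a regular tree language recognized by an effectively computable deterministic bottom-up automaton $\mathcal{A}_{\hat{f}}$. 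The predicate $\mathbf{Card}_{p,n}$ is absorbed by having each state store the residues, modulo the finitely many moduli $n$ appearing in $f$, of the cardinalities of the relevant sets restricted to the subtree processed so far; storing only residues keeps the state space finite. Evaluating $\mathcal{A}_{\hat{f}}$ is one bottom-up pass assigning a single state per node in constant time, for total time $O(|\mathcal{A}_{\hat{f}}|\cdot n)$, and $G\models f$ iff the root state is accepting; setting $F(|f|,\tw)=|\Sigma_{\tw}|\cdot|\mathcal{A}_{\hat{f}}|$ gives the claimed bound.

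The main obstacle is the automaton-construction step itself. Each existential set quantifier is handled by a projection (powerset) of the state space and each negation by complementation, so an alternation depth of $q$ in $f$ forces up to $q$ nested powerset constructions, and the number of states — hence $F$ — grows non-elementarily in $|f|$. This blow-up is intrinsic, which is precisely why the theorem asserts only that $F$ is \emph{some computable function} rather than an elementary one. A secondary technical point is checking that the set-variable-to-tree-position encoding is faithful, i.e. that assignments of vertex/edge sets in $G$ correspond bijectively to assignments of position sets in $t$; this is exactly where the two tree-decomposition axioms (edge-coverage and subtree-connectivity) must be used, to ensure that adjacency and color information gleaned locally from bags is globally consistent.
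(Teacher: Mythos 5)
The paper does not prove this statement at all: it is Courcelle's Theorem, imported verbatim from \cite{courcelle1990monadic} as a black box to motivate the existence of a dynamic-programming algorithm (the paper's actual algorithmic content is the explicit DP in Theorem \ref{theorem:DP}, which bypasses Courcelle entirely). So there is no in-paper proof to compare against; what you have written is the standard automata-theoretic proof of the cited classical result, and as a high-level sketch it is essentially correct. The architecture you describe --- nice tree decomposition encoded as a labeled binary tree over an alphabet $\Sigma_{\tw}$ of bag types, translation of $f$ into an MSO formula over that tree, the Doner--Thatcher--Wright correspondence to obtain a deterministic bottom-up tree automaton, modular residue tracking for $\mathbf{Card}_{p,n}$, and a single linear bottom-up pass --- is exactly how the theorem is proved in the literature, and you correctly identify the two genuinely delicate points: the non-elementary blow-up from alternating projection and complementation (which is why $F$ is only ``computable''), and the faithfulness of the vertex/edge-to-position encoding, which is where the two tree-decomposition axioms do their work. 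Two small remarks. First, to make the ``ownership'' bijection clean you should root the decomposition at a node with empty bag so that every vertex is forgotten exactly once and every edge is introduced exactly once; you gesture at this but it is worth stating. Second, a caveat relevant to how the paper \emph{uses} the theorem: the modular-counting argument keeps the state space finite only when the moduli in $f$ are fixed, whereas the paper's MSO encoding of PM-VC-Sym in Proposition \ref{prop:monadic} uses $\mathbf{Card}_{d_i,n}$ with modulus equal to the number of vertices $n$, so the constant $F(|f|,\tw)$ there silently depends on $n$ --- this does not affect the correctness of your proof of the theorem as stated, but it is precisely the gap that the paper's hand-rolled DP in Theorem \ref{theorem:DP} is designed to close.
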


\begin{proposition}
\label{prop:monadic}
EXISTS-PMVC-Sym can be expressed in extended MSO with polynomial length.
\end{proposition}

We show that EXISTS-PMVC-Sym on bi-colored graphs can be expressed in extended Monadic second-order logic (MSO) with cardinality predicates. To represent possibly two colors on one edge, we view all edges in $G$ as directed edges, while the direction can be arbitrary.  Then, for each color, we define two edge sets $E_{tail}^i$ and $E_{head}^i$ as follows:

 $$E_{head}^i=\{e=(u,v)\in E | \text{the color of $e$ at $u$ is $i$}\},$$
 
 $$E_{tail}^i=\{e=(u,v)\in E | \text{the color of $e$ at $v$ is $i$}\}.$$

For the sake of simplicity, we define the adjacent relations between a vertex and an edge by the predicate $\mathbf{edg}$:

$$head(v,e)=\exists u\in V,  \mathbf{edg}(e,v,u),$$

$$tail(v,e)=\exists u\in V, \mathbf{edg}(e,u,v),$$

$$adj(v,e)=head(v,e)\vee tail(v,e).$$

For two edges, we define a predicate indicating whether two edges share a vertex:
$$
overlap(e_1,e_2)  = \exists v\in V, adj(v,e_1) \wedge adj(v,e_2).
$$

Then the MSO2 formula of EXISTS-PMVC-Sym is:

$$\exists V_1 V_2\cdots V_d\subseteq V, P\subseteq E, \texttt{INDEPENDENT} \wedge \texttt{COLOR} \wedge \texttt{CARD},$$ where

$$\texttt{INDEPENDENT}=\forall e_1,e_2\in P, \neg overlap(e_1,e_2)$$
$$\texttt{COLOR}= (\forall v\in V, \exists e\in P, (adj(v,e))\wedge \big( \bigwedge_i (v\in V_i)\to   \big((head(v,e)\to  e\in E_{head}^i) \wedge(tail(v,e)\to e\in E_{tail}^i)\big) \big)$$

$$\texttt{CARD} =  \bigvee_{c\in \mathcal{C}}\bigwedge_i Card_{c(i),n}(V_i)$$
\hfill\qedsymbol

The size of the CMSO formula above is not constant, but polynomially bounded by the size of the input graph. The polynomial size of the formula come from 1) symmetric constraints encoded by the disjunction of all legal colorings. 2) the unary encoding of moduli in the counting predicates.
\end{document}